\documentclass[pra,onecolumn]{revtex4}
\usepackage{amsmath,amsfonts,amsthm,amssymb,xspace}
\usepackage{mathtools}
\usepackage{graphicx}
\usepackage{xcolor}
\usepackage{setspace,caption}

\setlength{\oddsidemargin}{0.25 in}
\setlength{\evensidemargin}{-0.25 in}
\setlength{\topmargin}{-0.6 in}
\setlength{\textwidth}{6.5 in}
\setlength{\textheight}{8.5 in}
\setlength{\headsep}{0.75 in}
\setlength{\parindent}{0 in}
\setlength{\parskip}{0.1 in}


\newcommand\mbR{\mbox{$\mathbb{R}$}}
\newcommand\mbC{\mbox{$\mathbb{C}$}}

\newcommand\ket[1]{| #1 \rangle}

\newcommand\rank{\mbox{\tt {rank}}\xspace}

\newcommand\prank{\mbox{\tt {rank}$_{\tt psd}$}\xspace}
\newcommand\kprank{\mbox{\tt {rank}$_{\tt psd}^{(k)}$}\xspace}
\newcommand\tr{\mbox{\tt {tr}}\xspace}

\newcommand\alice{\mbox{\sf Alice}\xspace}
\newcommand\bob{\mbox{\sf Bob}\xspace}
\newcommand\rcomm{\mbox{\sf {RComm}}\xspace}
\newcommand\qcomm{\mbox{\sf {QComm}}\xspace}

\newcommand\size{\mbox{\sf {size}}\xspace}
\newcommand\Q{\mbox{\sf {Q}}\xspace}
\newcommand\R{\mbox{\sf {R}}\xspace}

\newcounter{thm}
\theoremstyle{definition}
\newtheorem{definition}[thm]{Definition}

\theoremstyle{plain}
\newtheorem{prop}[thm]{Proposition}
\newtheorem{theorem}[thm]{Theorem}
\newtheorem{lemma}[thm]{Lemma}
\newtheorem{fact}[thm]{Fact}
\newtheorem{corollary}[thm]{Corollary}

\newcommand{\comment}[1]{{}}


\begin{document}

    \title{\vspace{-1cm} Quantum and Classical Hybrid Generations for Classical Correlations}

    \author{Xiaodie Lin$^{1}$, Zhaohui Wei$^{1}$, and Penghui Yao$^{2}$}
    \affiliation{$^{1}$Center for Quantum Information, Institute for Interdisciplinary Information Sciences, Tsinghua University, Beijing 100084, P. R. China\\$^{2}$State Key Laboratory for Novel Software Technology, Nanjing University, Nanjing, Jiangsu Province 210023, P. R. China}

    \begin{abstract}
     We consider two-stage hybrid protocols that combine {quantum resource and classical resource} to generate classical correlations shared by two separated players. Our motivation is twofold. First, in the near future the scale of quantum information processing is quite limited, and when {quantum resource available is not sufficient for certain tasks}, a possible way to strengthen the capability of quantum schemes is introducing extra classical resource. We analyze the mathematical structures of these hybrid protocols, and {characterize the relation between the amount of quantum resource and classical resource needed}. Second, a fundamental open problem in communication complexity theory is to describe the advantages of sharing prior quantum entanglement over sharing prior randomness, which is still widely open. {It turns out that our quantum and classical} hybrid protocols provide new insight into this important problem.
    \end{abstract}

    \maketitle

    \section{Introduction}
    Suppose two separated parties, Alice and Bob, aim to outputting random variables $X$ and $Y$, such that $(X,Y)$ is distributed exactly according to a target joint probability distribution $P$. That is to say, Alice and Bob want to sample a shared randomness $P$, and sometimes we call it a \emph{classical correlation}. {Then an important problem is, what is the minimum cost of generating an arbitrary classical correlation?}

    {Actually this problem has been systematically studied \cite{zhang2012quantum,jain2013efficient,jain2017multipartite}.} Generally, $P$ is not a product distribution, thus \alice and \bob can share a seed correlation $(X',Y')$ and each applies a local operation on the corresponding subsystem without communication. The minimum {\emph{size}} of this seed distribution, {i.e., the half of the total number of bits}, is defined to be the \emph{randomized correlation complexity} of $P$, denoted $\R(P)$. Alternatively, the two parties can also share a \emph{quantum} state $\sigma$ as a seed state, on which the two parities apply local quantum operations without communication to generate $(X,Y)$. In this case, the minimum size of the quantum seed state $\sigma$, {i.e., the half of the total number of qubits}, is called the \emph{quantum correlation complexity}, denoted $\Q(P)$.

    Instead of sharing seed states, \alice and \bob can also generate a correlation from scratch by communication only. When communicating quantum information, the minimum number of qubits exchanged between \alice and \bob, initially sharing nothing, to produce $P$ at the end of the protocol is defined as the {\em quantum communication complexity} of $P$, denoted $\qcomm(P)$. Similarly, one can also define the {\em  randomized communication complexity} of $P$, denoted $\rcomm(P)$, as the minimum number of bits exchanged to produce $P$. It turns out that for any $P$, the correlation complexity and the communication complexity are always the same, namely $\qcomm(P) = \Q(P)$ and $\rcomm(P) = \R(P)$ \cite{zhang2012quantum}. Therefore, we can simply use the notations \Q and \R to denote the quantities in quantum and classical settings respectively. In this paper, {when generating classical correlations} by quantum procedures, we will mainly focus on the setting with seed states.

    In fact, the full characterizations for \Q and \R have been achieved \cite{zhang2012quantum,jain2013efficient}, and for any classical correlation $P$. That is for any classical correlation $P$
    \begin{equation}
    \R(P) = \lceil \log_2 \rank_+(P) \rceil,
    \label{eq:qcorr+rank}
    \end{equation}
    and
    \begin{equation}
    \Q(P) = \lceil \log_2 \prank(P) \rceil.
    \label{eq:qcorrprank}
    \end{equation}
    Here for any nonnegative matrix $P\in \mbR_+^{n\times m}$, $\rank_+(P)$ is the nonnegative rank, which is defined as the minimum number $r$ such that $P$ can be decomposed as the summation of $r$ nonnegative matrices of rank $1$. And $\prank(P)$ is the positive semi-definite rank ({PSD-rank}), which is the minimum $r$ such that there are $r \times r$ positive semi-definite matrices $C_x$, $D_y\in\mbC^{r\times r}$, satisfying that $P(x,y) = \tr (C_x  D_y)$, for all $x$ and $y$ \cite{fiorini2012linear,fawzi2015positive}.

    It can be shown that the gap between nonnegative ranks and {PSD-ranks} can be huge, and this therefore reveals the remarkable advantages of quantum schemes in generating classical correlations. For example, consider the following $2^n \times 2^n$ matrix $M \in \mbR^{2^n\times 2^n}_+$ with rows and columns indexed by $n$-bit strings $a$ and $b$, and real nonnegative entries $M_{ab} := (1 - a^{\intercal} b)^2$, {where $a^{\intercal} b$ is the mod $2$ inner product between $a$ and $b$}. Then we have the following conclusions.
    \begin{fact}[\cite{fiorini2012linear}]
    It holds that $\rank_+(M) = 2^{\Omega(n)}$ and $\prank(M) = O(n)$.
    \end{fact}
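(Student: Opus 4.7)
The proof naturally splits into the two bounds, which are established by essentially disjoint techniques.

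For the upper bound $\prank(M) = O(n)$, I would exhibit an explicit small PSD factorization by noting that $1 - a^{\intercal} b$ is already a bilinear form in vectors of dimension $n+1$: setting $p_a := (1, -a_1, \ldots, -a_n)$ and $q_b := (1, b_1, \ldots, b_n)$ in $\mbR^{n+1}$, one has $1 - a^{\intercal} b = \langle p_a, q_b \rangle$ (interpreting the inner product over the integers). Squaring and using the identity $\langle x, y \rangle^2 = \tr\bigl((xx^{\intercal})(yy^{\intercal})\bigr)$ yields
\[
M_{ab} = (1 - a^{\intercal} b)^2 = \tr\bigl(p_a p_a^{\intercal} \cdot q_b q_b^{\intercal}\bigr).
\]
Taking $C_a := p_a p_a^{\intercal}$ and $D_b := q_b q_b^{\intercal}$ produces a factorization in which both factors are rank-one PSD matrices in $\mbR^{(n+1)\times(n+1)}$, hence $\prank(M) \leq n+1$.

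For the lower bound $\rank_+(M) = 2^{\Omega(n)}$, the plan is to exploit the combinatorial characterization of nonnegative rank via rectangle covers of the support. The zero set of $M$ is exactly $\{(a,b) : a^{\intercal} b = 1\}$, the ``unique-intersection'' relation, so the support pattern of $M$ encodes the complement of the unique-disjointness predicate $\mathrm{UDISJ}_n$. Following Fiorini--Massar--Pokutta--Tiwary--de Wolf, I would identify $M$ (up to positive reweighting of rows and columns) with the slack matrix of the correlation polytope $\mathrm{COR}_n := \mathrm{conv}\{a a^{\intercal} : a \in \{0,1\}^n\}$, so that $\rank_+(M)$ coincides with the extension complexity of $\mathrm{COR}_n$. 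The desired exponential lower bound is then obtained by lifting known communication-complexity lower bounds for $\mathrm{UDISJ}_n$ through the standard rectangle-cover bound on nonnegative rank.

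The main obstacle is precisely this lower bound. The PSD bound is a short polynomial trick, but proving $\rank_+(M) = 2^{\Omega(n)}$ requires genuine communication-complexity machinery: a rectangle-counting argument showing that any cover of the pairs with $a^{\intercal} b \neq 1$ by combinatorial rectangles disjoint from the pairs $a^{\intercal} b = 1$ must have exponential size. Here I would invoke Razborov's corruption bound applied to a suitable product distribution concentrated on inputs with $a^{\intercal} b \in \{0,1\}$, rather than attempting a direct combinatorial attack.
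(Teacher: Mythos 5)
The paper itself offers no proof of this fact: it is imported directly from \cite{fiorini2012linear}, and your sketch is essentially the argument of that reference --- the rank-one factorization $C_a=p_ap_a^{\intercal}$, $D_b=q_bq_b^{\intercal}$ giving $\prank(M)\le n+1$, and the lower bound on $\rank_+(M)$ obtained by combining the rectangle-covering bound on nonnegative rank with Razborov-style corruption for unique disjointness. Two caveats are worth recording. First, you have tacitly read $a^{\intercal}b$ as the inner product over the integers; this is indeed what the cited fact requires, since under the paper's literal ``mod $2$'' wording one would have $M=\frac{1}{2}(J+H)$ with $H$ the $\pm1$ inner-product matrix, hence $\rank(M)\ge 2^n-1$ and $\prank(M)\ge\sqrt{\rank(M)}$, contradicting $\prank(M)=O(n)$; so your reading is the correct one and the paper's parenthetical is a slip. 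Second, the detour through the correlation polytope is unnecessary and slightly misstated: $\rank_+(M)$ does not coincide with the extension complexity of $\mathrm{COR}_n$; rather $M$ is (a submatrix of) its slack matrix, so by Yannakakis a lower bound on $\rank_+(M)$ transfers to extension complexity, not conversely. For the statement at hand the direct chain suffices: a nonnegative factorization of rank $r$ covers the support $\{(a,b):a^{\intercal}b\neq 1\}$ by $r$ rectangles disjoint from the zero set $\{(a,b):a^{\intercal}b=1\}$, and the corruption bound on a distribution concentrated on $a^{\intercal}b\in\{0,1\}$ forces $r=2^{\Omega(n)}$, exactly as you propose.
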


    Though quantum advantages can be huge, and {extraordinary progress has been achieved} on physical implementation of quantum computation, it is widely believed that the availability of large scale quantum computers is still far \cite{arute2019quantum,preskill2018quantum}. As a consequence, in the near future the scale of quantum information processing, especially the scale of entanglement, is quite limited, say dozens or hundreds of qubits. Therefore, for some realistic classical correlations $P$, it is possible that $\lceil \log_2 \prank(P) \rceil$, {the necessary size of a shared seed quantum state that produces $P$ according to \cite{jain2013efficient}} exceeds the size that we can physically realize. In this situation, a natural question is, can we design a proper {quantum and classical hybrid protocol} to generate $P$ in such a way that, {it not only fulfills the task completely, but also fully exploits the potential of our quantum capability?} In this manuscript, by looking into the rich mathematical structures of {quantum and classical hybrid protocols}, we will give a positive answer to the above question.

    {Particularly, we first consider the case that the only restriction on our capability to manipulate quantum states is the scale, which means we can require any quantum states whenever we want as long as their size is within our means, {which may depend on the classical messages exchanged}. Then we prove that if a hybrid protocol has to be utilized to generate a large classical correlation $P$, the protocol can be fully characterized by a concept called {\emph{$k$-block positive semi-definite ranks}, which is essentially a generalization of the concept of PSD-ranks, and reveals the relation between the amount of classic resource needed and the quantum scale available. By looking into the rich mathematical structures of this new concept, we prove that the shortage of one single qubit may require a huge amount of classical resource to compensate, thus providing new evidences of quantum advantages in generating classical correlations. Furthermore, we also consider another setting with more rigorous restrictions on our freedom of exploiting quantum {resource}, i.e., in addition to the restricted quantum scale, only one quantum state is provided for the players and it is independent of classical messages. Based on the idea of entanglement transformation, we show that the second model actually has similar power with the first one.}

    {In the meanwhile, our results are also related to a famous open problem in quantum communication complexity theory. Quantum communication complexity was introduced by Yao in~\cite{Yao93}, which investigates the advantages and limit of the communication complexity models when the players are allowed to exchanges quantum messages. Dozens of examples have been discovered that exhibit the advantages of quantumness (see~\cite{10.1145/3357713.3384243} and references therein) as well as numerous methods proving the lower bounds on quantum communication complexity have been established~\cite{10.5555/1803907}. In the model introduced by Yao, the players may share classical random strings independent of the input before exchanging messages. This is named as the Yao's model. Thanks to Newman's theorem~\cite{NEWMAN199167}, we know that the shared randomness can only save at most $O(\log n)$ bits communication, where $n$ is the length of the inputs. Cleve and Buhrman in~\cite{CB97} introduced another model where the players are allowed to preshare arbitrary bipartite quantum states, which is named as the Cleve-Buhrman model. Using quantum teleportation \cite{bennett1993teleporting}, we may assume that the players in the Cleve-Buhrman model only exchange classical messages while the communication cost increases by at most factor 2. }

    {A fundamental problem in communication complexity is how much communication can be saved if the players share entanglement. In other words, what is the largest separation between the Yao's model and the Cleve-Buhrman model? The role of entanglement in quantum computing has always been a core topic in the theory of quantum computation, which is studied in various models of computation. In particular, it has been shown in a very recent breakthrough result~\cite{JNVWY'20} that multi-prover interactive proof systems with sharing entanglement are able to decide the Halting problem, while the ones without sharing entanglement are in $\mathrm{NEXP}$~\cite{babai1991non}. However, little is known about the power of entanglement in communication complexity. Indeed, till now we do not have any nontrivial upper bound on the separation between the Yao's model and the Cleve-Burhman model. Meanwhile, we are not aware of any example that exhibiting a super-constant separation between these two models as well. In this paper, our results provide more facts on the power of entanglement in the context of generating classical correlation, which show that sharing entanglement can save the classical communication significantly, and thus hopefully shed a new light on this widely open problem.}

    \section{The hybrid protocols}
    Recall that for convenience we define the {size} of a bipartite distribution as the half of the total number of bits. Similarly, the size of a bipartite quantum state is the half of the total number of qubits. We suppose the largest bipartite quantum system we can manipulate has a size of $s$ qubits, and for convenience we call it \emph{quantum capability}. We now consider a target classical correlation $P\in \mbR_+^{n\times m}$ with $s<\lceil \log_2 \prank(P) \rceil$. Clearly, we cannot generate $P$ using a purely quantum scheme.

    Therefore, we turn to analyze the possibility that combine quantum power and classical power together. To make the hybrid protocol valuable, we hope the extra classical cost needed will be dramatically smaller than that of a pure classical protocol. In the meantime, {as we have different ways to combine quantum subprotocols and classical ones for hybrid protocols in principle, we now analyze two main possibilities as below}.

    \subsection{The classical-quantum hybrid}

    {Suppose the target classical correlation can be expressed as a linear combination of two other ones, i.e., $P=\frac{1}{2}P_1+\frac{1}{2}P_2$, where $P_1$ and $P_2$ are nonnegative matrices}. Then one can easily construct examples with $\prank(P_1)<\prank(P)$ and $\prank(P_2)<\prank(P)$, which inspires us to design the following natural hybrid protocol. Assume $P=\sum_{i\in I}p_iP_i$, where $\{p_i\}$ is a probability distribution on $i\in I$, and for any $i\in I$, $P_i\in \mbR_+^{n\times m}$ is a classical correlation with $\lceil \log_2 \prank(P_i) \rceil\leq s$, then \alice and \bob can produce a sampling of $P$ as below. They first sample a shared output $i\in I$ classically according to the probability distribution $\{p_i\}$, then one of them prepares a bipartite quantum state $\rho_i$ that can serve as a seed state to produce $P_i$ and sends half of the qubits to the other party by quantum communication, which is within the quantum capability. After that, they generate a classical correlation $P_i$ by performing local measurements on $\rho_i$ like in a purely quantum protocol. Since $\sum_{i\in I}p_iP_i=P$, overall the hybrid protocol generates exactly the target classical correlation $P$.

    Since in the first stage of the protocol \alice and \bob sample $i\in I$, we call this a \emph{classical-quantum hybrid protocol}. Here the classical cost is $c=\lceil \log_2 |I| \rceil$ bits, and the quantum cost is $q=\max_i\size(\rho_i)$ qubits. Since it holds that $q\leq s$, the current hybrid protocol can generate the target correlation $P$ within the quantum capability. Below is a simple example that demonstrates this idea.

    Let
    \begin{equation}\label{eq:diagonal}
    P = \frac{1}{2^k}\begin{bmatrix}
    P_1 &  &  &  & \\
    & P_2  &  &  &\\
    &  & \ddots  &  &\\
    &  &  &  & P_{2^k}
    \end{bmatrix},
    \end{equation}
    where $2^k\cdot P\in \mbR_+^{2^kn\times 2^km}$ is a block diagonal matrix, and for the convenience of later discussion, we denote it by $\text{diag}(P_1,P_2,...,P_{2^k})$. For each $i\in[2^k]$, suppose $P_i\in \mbR_+^{n\times m}$ is a classical correlation satisfying $\prank(P_i) =2^s$. Then it can be seen that $P$, as a classical correlation, cannot be produced using a purely quantum protocol, as the current quantum capability $s$ is smaller than $\lceil \log_2 \prank(P) \rceil=k+s$. However, we can generate it using a hybrid protocol, where in the first stage it takes them classical communication of $k$ bits to sample $i\in[2^k]$, then they consume a shared quantum state of size $s$ qubits to generate the corresponding $P_i$. As long as they adjust the output labels properly, the overall output will be exactly a sample of $P$.

    As pointed out before, examples of $P_i$ exist such that $\rank_+(P_i)\gg2^s$, {i.e., when sampling $P_i$ quantum schemes enjoy remarkable advantages over classical ones}. If this is the case, though we cannot produce $P$ using a purely quantum scheme directly, such a hybrid protocol may decrease the amount of classical resource dramatically.

    Due to the above example, we are tempted to consider the following realistic problem. Still assume our quantum capability is known to be $s$, and the target classical correlation $P$ satisfies $s<\lceil \log_2 \prank(P) \rceil$. Then if we choose to generate $P$ using a classical-quantum hybrid protocol, what is the least amount of {extra classical resource} needed? Or, to put it another way, given an arbitrary classical correlation $P$, what is the minimum number $m$ such that $P$ can be expressed as the summation of $m$ nonnegative matrices with PSD-rank not larger than $2^s$? To answer this question, we first introduce the following definition, which is a generalization of the concept of PSD-rank.

    \begin{definition} A \emph{$k$-block positive semi-definite factorization} of a nonnegative matrix $P\in \mbR_+^{n\times m}$ is a collection of positive semi-definite matrices $C_i=\text{diag}(C_i^1,...,C_i^r),D_j=\text{diag}(D_j^1,...,D_j^r)\in \mbC^{kr\times kr}$ that satisfy
    \[P_{ij}=\tr (C_iD_j)=\sum_{l=1}^r\tr (C_i^lD_j^l),\ i=1,...,n,\ j=1...,m,\]
    where $C_i^l,D_j^l\in \mbC^{k\times k}$ for each $i$, $j$, and $l$. And the {$k$-block positive semi-definite rank}, denoted $\kprank(P)$, is defined as the smallest integer $r$ for which such a $k$-block positive semi-definite factorization exists.
    \end{definition}

    We now prove that the question asked above is perfectly answered by the concept of $2^s$-block semi-definite ranks, where the corresponding classical-quantum hybrid protocol is exactly characterized by an optimal $2^s$-block positive semi-definite factorization.

    \begin{theorem} Suppose the quantum capability is $s$ qubits. Then the minimum amount of classical communication needed in a classical-quantum hybrid protocol producing $P$ is exactly $\lceil\log_2{\tt rank}_{\tt psd}^{(2^s)}(P)\rceil$ bits.
    \end{theorem}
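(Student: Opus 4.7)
The plan is to prove the theorem by matching two constructions: (i) convert any $2^s$-block PSD factorization of rank $r$ into a classical-quantum hybrid protocol with $\lceil \log_2 r \rceil$ bits of classical communication and quantum size at most $s$, and (ii) conversely, extract from any such protocol a $2^s$-block PSD factorization whose rank is at most $2^c$, where $c$ is the classical cost. Both directions will be rephrasings of Eq.~\eqref{eq:qcorrprank} in the block-diagonal setting, glued together by the convex combination $P = \sum_i p_i P_i$ that defines a classical-quantum hybrid.

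For the upper bound, suppose $\kprank(P) = r$ with $k = 2^s$, realized by $C_i = \text{diag}(C_i^1,\ldots,C_i^r)$ and $D_j = \text{diag}(D_j^1,\ldots,D_j^r)$. I would set $p_l = \sum_{i,j}\tr(C_i^l D_j^l)$ and $P_l(i,j) = \tr(C_i^l D_j^l)/p_l$ whenever $p_l > 0$. Because $P$ is itself a probability distribution, $\sum_l p_l = 1$, so $\{p_l\}$ is a probability distribution over $l \in [r]$; each $P_l$ is a nonnegative matrix summing to $1$ with a PSD factorization of inner dimension $k = 2^s$, hence $\prank(P_l) \leq 2^s$. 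The hybrid protocol then samples $l$ with probability $p_l$ using $\lceil \log_2 r \rceil$ bits, and produces $P_l$ from a shared seed state of size $s$ by the pure-quantum construction of Eq.~\eqref{eq:qcorrprank}.

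For the lower bound, consider any classical-quantum hybrid protocol that uses $c$ bits of classical communication and at most $s$ qubits in the quantum stage. Writing the index set as $I$ with $|I| \leq 2^c$ and the classical distribution as $\{p_i\}_{i \in I}$, the resulting correlation satisfies $P = \sum_{i \in I} p_i P_i$ where each $P_i$ is generated from a seed state of size $\leq s$, hence $\prank(P_i) \leq 2^s$. By Eq.~\eqref{eq:qcorrprank} applied to each $P_i$, there exist $C_x^i, D_y^i \in \mathbb{C}^{2^s \times 2^s}$ PSD with $P_i(x,y) = \tr(C_x^i D_y^i)$. I would then assemble the block-diagonal matrices $C_x = \text{diag}(p_1 C_x^1,\ldots, p_{|I|} C_x^{|I|})$ and $D_y = \text{diag}(D_y^1,\ldots, D_y^{|I|})$, which are PSD and satisfy $\tr(C_x D_y) = \sum_i p_i \tr(C_x^i D_y^i) = P(x,y)$. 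This exhibits a $2^s$-block PSD factorization of $P$ of inner rank $|I| \leq 2^c$, so $c \geq \lceil \log_2 \kprank(P)\rceil$ with $k = 2^s$.

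The two reductions are essentially bookkeeping, and no step requires a genuinely new idea beyond recognizing that a classical-quantum hybrid is precisely a convex decomposition into distributions of small PSD-rank. The only delicate points are (a) absorbing the probabilities $p_i$ into one of the two block-diagonal factors so that no ancillary scaling is needed, and (b) handling the edge case where some $p_l$ (or $p_i$) vanishes, which I would resolve by simply dropping those blocks and noting that the rank is then strictly less than claimed. I anticipate the main obstacle, if any, is stylistic rather than mathematical: namely, making clear that in (i) each $P_l$ is a legitimate probability distribution (which requires verifying $\sum_l p_l = 1$, a consequence of $\sum_{i,j} P(i,j) = 1$), so that the classical first stage is well defined.
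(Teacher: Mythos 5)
Your proposal is correct and follows essentially the same route as the paper's proof: one direction assembles the block-diagonal factorization $C_x=\text{diag}(p_1C_x^1,\ldots)$, $D_y=\text{diag}(D_y^1,\ldots)$ from the per-branch PSD factorizations of the $P_i$, and the other normalizes each diagonal block $Q_l(x,y)=\tr(C_x^lD_y^l)$ into a correlation $P_l=Q_l/\|Q_l\|_1$ sampled with probability $p_l=\|Q_l\|_1$. Your explicit handling of vanishing $p_l$ (dropping those blocks) is a minor refinement of the paper's remark that minimality forces $p_l>0$, but the argument is otherwise the same.
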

    \begin{proof} Suppose the minimal classical cost is $c$ bits. Then we have a factorization $P(x,y)=\sum_{i=1}^{2^c}p_iP_i(x,y)$, where $\{p_i\}$ is a probability distribution on $i\in [2^c]$, and each correlation $P_i$ can be generated by quantum communication of $\lceil\log_2\prank(P_i)\rceil\leq s$ qubits with a purely quantum protocol. Suppose a positive semi-definite factorization of $P_i$ is $P_i(x,y)=\tr(C_x^iD_y^i)$, where without loss of generality $C_x^i,D_y^i$ can be chosen as positive semi-definite matrices of size $2^s\times 2^s$ for any $x\in[n],y\in[m]$. Let $C_x=\text{diag}(p_1C_x^1,...,p_{2^c}C_x^{2^c})$ and $D_y=\text{diag}(D_y^1,...,D_y^{2^c})$. Then it can be seen that $C_x$ and $D_y$ are block diagonal positive semi-definite matrices with each block of size    $2^{s}\times2^{s}$, and furthermore, $P(x,y)=\tr(C_xD_y)$ for any $x\in[n],y\in[m]$. Therefore, it holds that ${\tt rank}_{\tt psd}^{(2^s)}(P)\leq2^c$, i.e., $\lceil\log_2{\tt rank}_{\tt psd}^{(2^s)}(P)\rceil\leq c$.

    On the other hand, suppose $r={\tt rank}_{\tt psd}^{(2^s)}(P)$. Then one can find block diagonal positive semi-definite matrices $C_x$ and $D_y$ of block size $2^{s}\times2^{s}$ such that $P(x,y)=\tr(C_xD_y)$ for any $x\in[n],y\in[m]$. That is to say, we can suppose $C_x=\text{diag}(C_x^1,...,C_x^{r})$ and $D_y=\text{diag}(D_y^1,...,D_y^{r})$, where $C_x^i$ and $D_y^i$ are positive semi-definite matrices of size $2^s\times 2^s$ for any $i\in[r]$. Define $P_i$ to be the classical correlation $Q_i/\Vert Q_i\Vert_1$, where $Q_i\in\mbR_+^{n\times m}$ and $Q_i(x,y)=\tr(C_x^iD_y^i)$ for any $x\in[n],y\in[m]$. Note that this is well-defined: If we let $p_i=\Vert Q_i\Vert_1$, then $p_i>0$ according to the definition of $2^s$-block diagonal positive semi-definite rank. Then it is not hard to see that $P=\sum_{i=1}^{r}p_iP_i$, and for each $i\in[r]$, it holds that $\prank(P_i)\leq2^s$. Therefore, one can design a classical-quantum hybrid protocol to generate $P$ corresponding to this factorization, where the cost of classical communication is $\lceil\log_2r\rceil$, implying that $c\leq\lceil\log_2{\tt rank}_{\tt psd}^{(2^s)}(P)\rceil$.
    \end{proof}
    In real-life implementations of sampling $P\in \mbR_+^{n\times m}$, we often allow a small deviation of $\epsilon$, which suggests us define an approximate version of $k$-block positive semi-definite rank, that is,
    \begin{equation}
    {\tt rank}_{\tt psd,\epsilon}^{(k)}(P)\equiv\min\{\kprank(Q):\text{ }Q\in \mbR_+^{n\times m}\text{ is a probability distribution and }\|P-Q\|_{1}\leq\epsilon\},
    \end{equation}
    {where $\|P-Q\|_{1}$ is the $1$-norm of $P-Q$, i.e., the summation of the absolute values of all entries of $P-Q$}. Then it can be seen that when tolerating a small additive error $\epsilon$, the cost of optimal classical-quantum protocol that samples $P$ approximately is characterized by the corresponding approximate $k$-block positive semi-definite rank.

    Therefore, we now know that given the quantum capability $s$ qubits, suppose $s<\lceil \log_2 \prank(P) \rceil$, then in order to design a proper classical-quantum hybrid protocol generating $P$, estimating ${\tt rank}_{\tt psd}^{(2^s)}(P)$ is crucial. In the rest of the current section, we will focus on the characterization of ${\tt rank}_{\tt psd}^{(2^s)}(P)$.

    Firstly, according to the properties of ranks and PSD-ranks, we immediately have the following lower bounds for ${\tt rank}_{\tt psd}^{(2^s)}(P)$.
    \begin{fact}For any nonnegative matrix $P\in \mbR_+^{n\times m}$ and any integer $k\geq1$, it holds that
    \begin{equation}\label{eq:withPSDRank}
    \kprank(P)\geq \frac{\prank(P)}{k}, \ \ {\tt rank}_{\tt psd,\epsilon}^{(k)}(P)\geq \frac{{\tt rank}_{\tt psd,\epsilon}(P)}{k},
    \end{equation}
    and
    \begin{equation}\label{eq:withRank}
    \kprank(P)\geq \frac{\rank(P)}{k^2}, \ \ {\tt rank}_{\tt psd,\epsilon}^{(k)}(P)\geq \frac{{\tt rank}_{\epsilon}(P)}{k^2},
    \end{equation}
    where ${\tt rank}_{\tt psd,\epsilon}(P)$ and ${\tt rank}_{\epsilon}(P)$ are the approximate PSD-rank and the approximate rank of $P$, respectively, i.e., ${\tt rank}_{\tt psd,\epsilon}(P)\equiv\min\{\prank(Q):\text{ }Q\in \mbR_+^{n\times m}\text{ is a probability distribution and }\|P-Q\|_{1}\leq\epsilon\}$ and ${\tt rank}_{\epsilon}(P)\equiv\min\{\rank(Q):\text{ }Q\in \mbR_+^{n\times m}\text{ is a probability distribution and }\|P-Q\|_{1}\leq\epsilon\}$.
    \end{fact}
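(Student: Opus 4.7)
The plan is to observe that any $k$-block PSD factorization of $P$ is already a special case of a PSD factorization (giving the $\prank$ bound after accounting for dimension $kr$) and also encodes a low-rank decomposition once one vectorizes the trace inner product on each $k\times k$ block (giving the $\rank$ bound, with the extra factor $k^2$ coming from the dimension of $\mbC^{k\times k}$). The approximate versions then fall out by applying the exact inequalities to the matrix $Q$ realizing the approximate $k$-block PSD rank.

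Concretely, I would first set $r=\kprank(P)$ and fix a factorization $C_i=\text{diag}(C_i^1,\dots,C_i^r)$, $D_j=\text{diag}(D_j^1,\dots,D_j^r)$ of block size $k\times k$, so that $P_{ij}=\tr(C_iD_j)$. Since a block-diagonal matrix whose blocks are PSD is itself PSD, the pair $(C_i,D_j)$ is already a PSD factorization of $P$ of dimension $kr$. Hence $\prank(P)\leq kr=k\,\kprank(P)$, which is the first bound.

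For the second inequality, I would expand each block trace as
\[
\tr(C_i^l D_j^l)=\sum_{a,b=1}^{k}(C_i^l)_{ab}(D_j^l)_{ba},
\]
so that $P$ is the sum, over $(l,a,b)\in[r]\times[k]\times[k]$, of the matrices with entries $(i,j)\mapsto (C_i^l)_{ab}(D_j^l)_{ba}$, each of which is a rank-one outer product of a column indexed by $i$ with a row indexed by $j$. Summing gives $\rank(P)\leq rk^2=k^2\,\kprank(P)$, which yields the second bound.

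For the approximate versions, I would take $Q$ achieving $\kprankeps(P)$, i.e.\ $\|P-Q\|_{1}\leq\epsilon$ with $\kprank(Q)={\tt rank}_{\tt psd,\epsilon}^{(k)}(P)$. Applying the two exact inequalities above to $Q$ yields $\prank(Q)\leq k\,\kprank(Q)$ and $\rank(Q)\leq k^2\,\kprank(Q)$. Since the same $Q$ is a feasible candidate in the definitions of ${\tt rank}_{\tt psd,\epsilon}(P)$ and ${\tt rank}_{\epsilon}(P)$, this gives ${\tt rank}_{\tt psd,\epsilon}(P)\leq k\,{\tt rank}_{\tt psd,\epsilon}^{(k)}(P)$ and ${\tt rank}_{\epsilon}(P)\leq k^2\,{\tt rank}_{\tt psd,\epsilon}^{(k)}(P)$, as required. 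There is no real obstacle here beyond noticing the block-diagonal/PSD identification in Step 1 and the $k^2$-dimensional vectorization of the trace pairing in Step 2; the approximate case is then immediate because the definition of $\kprankeps$ is structured so that its optimizer $Q$ simultaneously witnesses the other two approximate quantities.
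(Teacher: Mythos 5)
Your proposal is correct: the paper states this fact as immediate ``according to the properties of ranks and PSD-ranks'' and gives no explicit proof, and your argument supplies exactly the standard reasoning being invoked --- a $k$-block factorization with $r$ blocks is a PSD factorization of dimension $kr$ (giving $\prank(P)\le k\,\kprank(P)$), the entrywise expansion of each block trace writes $P$ as a sum of $rk^2$ rank-one matrices (giving $\rank(P)\le k^2\,\kprank(P)$), and the approximate statements follow by applying these to the optimizer $Q$, which is feasible for the other two approximate quantities. No gaps; note only that the direct vectorization in your second step is needed because chaining $\rank\le\prank^2$ with the first bound would give only $k^2r^2$, not $k^2 r$ --- which you correctly avoided.
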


    The above two lower bounds on exact cases are tight. For example, let $P$ be the classical correlation in Eq.\eqref{eq:diagonal}, then it holds that $\prank(P)=2^{s+k}$ and ${\tt rank}_{\tt psd}^{(2^s)}(P)\leq 2^{k}$, where the second fact comes from that we can decompose $P$ into the summation of $2^k$ classical correlations with each corresponding to one $P_i$. Hence ${\tt rank}_{\tt psd}^{(2^s)}(P)\leq \prank(P)/2^s$, and combined with Eq.\eqref{eq:withPSDRank} this means that actually ${\tt rank}_{\tt psd}^{(2^s)}(P)= \prank(P)/2^s=2^k$. Furthermore, if one chooses $P_i$ such that $\rank(P_i)=\prank(P_i)^2=2^{2s}$ for any $i\in[2^k]$, then we have $\rank(P)=2^{2s+c}$, and ${\tt rank}_{\tt psd}^{(2^s)}(P)=\rank(P)/2^{2s}$, implying that Eq.\eqref{eq:withRank} can also be tight. However, later we will see that in some cases these relations can be very loose.

    We next turn to upper bounds for $\kprank(P)$. It turns out that $\kprank(P)$ can be upper bounded by generalizing the idea in the example of Eq.\eqref{eq:diagonal}, and the notion of \emph{combinatorial rectangle} proposed by Yao \cite{yao1979some}, which plays a key role in communication complexity theory. Suppose $X\subseteq[n]$ and $Y\subseteq[m]$, then $X\times Y$ pins down a submatrix of $P$, called a combinatorial rectangle. Then we define a \emph{partition} of $P$ to be a series of nonzero combinatorial rectangles, where there is no overlap between any two of them and the union of all combinatorial rectangles contains all nonzero entries of $P$. If each combinatorial rectangle, regarded as a classical correlation after normalization, can be produced quantumly within the quantum capability, then $P$ can be generated by a classical-quantum protocol as a probability mixture of these combinatorial rectangles. Naturally, in this situation we are interested in the size of the optimal partition of $P$, which has the minimum number of combinatorial rectangles with each within the quantum capability. For this, we make the following definition.
    \begin{definition} Let $P\in \mbR_+^{n\times m}$ be a classical correlation. Define the \textbf{$k$-partition number} of $P$, denoted $C^k(P)$, as the minimum size of a partition of $P$ with the property that each combinatorial rectangle has PSD-rank at most $k$. For convenience, we call these combinatorial rectangles a \textbf{$k$-partition} of $P$.
    \end{definition}
    Then we have the following proposition.
    \begin{prop} For any nonnegative matrix $P\in \mbR_+^{n\times m}$ and any integer $k\geq1$, it holds that
    \begin{equation}
    \kprank(P)\leq C^k(p).
    \end{equation}
    \end{prop}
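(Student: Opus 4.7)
The plan is to convert a $k$-partition of $P$ directly into a $k$-block PSD factorization of the same size by placing the PSD factors of each rectangle along the block diagonal and zero-padding outside the rectangle. Concretely, I would start by fixing an optimal $k$-partition $R_1,\dots,R_t$ of $P$, where $t=C^k(P)$, with $R_l=X_l\times Y_l$ and $\prank(P|_{R_l})\le k$. For each $l$, let $\tilde C_x^l,\tilde D_y^l\in\mbC^{k\times k}$ be PSD matrices realizing a PSD factorization of the submatrix $P|_{R_l}$, so that $P(x,y)=\tr(\tilde C_x^l\tilde D_y^l)$ for all $(x,y)\in R_l$.

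Next, I would extend these factors to the whole row/column index set by zero-padding: set $C_x^l=\tilde C_x^l$ if $x\in X_l$ and $C_x^l=0$ otherwise, and likewise $D_y^l=\tilde D_y^l$ if $y\in Y_l$ and $D_y^l=0$ otherwise. Each $C_x^l,D_y^l$ is still PSD of size $k\times k$. Then I would assemble the candidate factorization
\[
C_x=\text{diag}(C_x^1,\dots,C_x^t),\qquad D_y=\text{diag}(D_y^1,\dots,D_y^t),
\]
which are block diagonal PSD matrices in $\mbC^{kt\times kt}$ of the exact form required by the definition of $\kprank$.

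The key verification is then the identity $P(x,y)=\tr(C_xD_y)=\sum_{l=1}^t\tr(C_x^lD_y^l)$. I would split into two cases based on where $(x,y)$ falls with respect to the partition. If $(x,y)$ lies in some rectangle $R_{l_0}$, then the disjointness of the partition forces $(x,y)\notin R_l$ for $l\ne l_0$, so either $x\notin X_l$ or $y\notin Y_l$, which makes $C_x^l=0$ or $D_y^l=0$ and hence $\tr(C_x^lD_y^l)=0$; the single surviving term equals $P(x,y)$ by the choice of $\tilde C_x^{l_0},\tilde D_y^{l_0}$. If $(x,y)$ lies in no rectangle, then $P(x,y)=0$ because the partition covers every nonzero entry of $P$, and every summand vanishes by the same zero-padding argument.

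I do not expect any serious obstacle here; the only subtlety is handling the zero entries of $P$ that may lie inside one of the chosen rectangles, but this is automatic because the factorization of $P|_{R_l}$ already reproduces those zeros via $\tr(\tilde C_x^l\tilde D_y^l)=0$. Since the constructed factorization has exactly $t=C^k(P)$ blocks, this yields $\kprank(P)\le C^k(P)$, completing the proof.
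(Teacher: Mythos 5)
Your proposal is correct and follows essentially the same route as the paper: take an optimal $k$-partition, PSD-factorize each rectangle with $k\times k$ factors, zero-pad outside the rectangle, and assemble the factors block-diagonally so that disjointness of the rectangles makes exactly one summand survive. The only cosmetic difference is that the paper factorizes each rectangle after normalizing it to a correlation and reinstates the weights $w_i$ in the blocks $\text{diag}(w_1C_x^1,\dots,w_tC_x^t)$, whereas you factorize the unnormalized submatrices $P|_{R_l}$ directly, which is an equivalent rescaling.
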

    \begin{proof}Suppose $t=C^k(P)$, and $\{P_1,P_2,...,P_t\}$ is an optimal $k$-\textbf{partition} of $P$. 
    Define the weight of the $i$-th combinatorial rectangle is the summation of all its entries, denoted $w_i$. Then $\sum_{i=1}^tw_i=1$, and $\{w_i,i\in[t]\}$ is a valid probability distribution.

    We expand the size of each $P_i$ to be $n\times m$ by adding zero entries with the positions of all nonzero entries the same as in $P$, which does not change its PSD-rank. For any $i\in[t]$ suppose an optimal positive semi-definite factorization of $P_i$ is $P_i(x,y)=\tr(C_x^iD_y^i)$, where $C_x^i,D_y^i$ are $k\times k$ positive semi-definite matrices for any $x\in[n],y\in[m]$. Let $C_x=\text{diag}(w_1C_x^1,...,w_{t}C_x^{t})$ and $D_y=\text{diag}(D_y^1,...,D_y^{t})$. Then it can be seen that $P(x,y)=\tr(C_xD_y)$ for any $x\in[n],y\in[m]$. Therefore, it holds that $\kprank(P)\leq C^k(p)$.
    \end{proof}
    We now consider a specific example of this upper bound. Again we go back to the one in Eq.\eqref{eq:diagonal}, and we already know that in this case ${\tt rank}_{\tt psd}^{(2^s)}(P)\leq 2^k$. Inspired by this example, the above upper bound naturally gives the same result, which means the amount of classical resource needed to perform a classical-quantum hybrid generating $P$ is at most $k$ bits. In the meantime, note that $\prank(P)=2^{s+k}$, that is to say, a purely quantum scheme producing $P$ needs a shared quantum state of size $s+k$ qubits. Therefore, it can be said that the $k$-bit classical resource involved in the classical-quantum protocol works quite efficiently, in the sense that it fulfills completely the task of the extra $k$-qubit quantum resource in a purely quantum scheme.

    However, this is not always the case: It is possible that the effect of quantum resource of one single qubit needs a large amount of classical resource to compensate! 
    Before exhibiting such an example, we would like to remark that this can be regarded as another angle to reveal the remarkable advantages of quantum resource over classical resource in generating correlations. Our example will be based on {\emph{Euclidean distance matrices}} that have been extensively studied~\cite{lin2010nonnegative,hrubevs2012nonnegative,shitov2019euclidean}.

    \begin{definition}(Euclidean Distance Matrix) Given $n$ distinct real numbers $c_1,...,c_n$, the corresponding Euclidean distance matrix (EDM) is the $n\times n$ symmetric and nonnegative matrix $Q(c_1,...,c_n)$ whose $(i,j)$-th entry $q_{i,j}$ is defined by
    \[q_{ij}=(c_i-c_j)^2,\ i,j=1,...,n.\]
    \end{definition}

    \begin{fact}\cite{shitov2019euclidean}
    There exist $n$ distinct real numbers $c_1,...,c_n$ such that $\rank(Q_1)=3,\prank(Q_1)=2$ and $\rank_+(Q_1)\geq 2\sqrt{n}-2$, where $Q_1=Q(c_1,...,c_n)$.
    \end{fact}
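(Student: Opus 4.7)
The plan is to attack the three claims separately: the upper bounds on $\rank$ and $\prank$ are explicit algebraic constructions, while the nonnegative-rank lower bound is the geometric heart of the statement.

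For $\rank(Q_1)=3$, I would expand
\[
q_{ij}=(c_i-c_j)^2=c_i^2-2c_ic_j+c_j^2,
\]
which gives the decomposition $Q_1=\mathbf{c}^{(2)}\mathbf{1}^T-2\,\mathbf{c}\mathbf{c}^T+\mathbf{1}(\mathbf{c}^{(2)})^T$ with $\mathbf{c}=(c_1,\dots,c_n)^T$ and $\mathbf{c}^{(2)}=(c_1^2,\dots,c_n^2)^T$. This immediately yields $\rank(Q_1)\le 3$, and for $n\ge 3$ with distinct $c_i$'s the Vandermonde-type linear independence of $\mathbf{1},\mathbf{c},\mathbf{c}^{(2)}$ forces equality. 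For $\prank(Q_1)=2$, I would exhibit the rank-one $2\times 2$ PSD factors
\[
C_i=\begin{pmatrix}1\\ c_i\end{pmatrix}\begin{pmatrix}1 & c_i\end{pmatrix},\qquad D_j=\begin{pmatrix}c_j\\ -1\end{pmatrix}\begin{pmatrix}c_j & -1\end{pmatrix},
\]
so that $\tr(C_i D_j)=(c_j-c_i)^2=q_{ij}$, proving $\prank(Q_1)\le 2$. The matching lower bound follows from the standard inequality $\rank(P)\le\prank(P)^2$ (obtained by writing $P_{ij}=\langle\mathrm{vec}(C_i),\mathrm{vec}(D_j)\rangle$ when the factors are $r\times r$), which forces $\prank(Q_1)\ge\sqrt{3}>1$ and hence $\prank(Q_1)\ge 2$.

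The substantive part is the bound $\rank_+(Q_1)\ge 2\sqrt{n}-2$. Here I would invoke the nested-polygon characterization of nonnegative rank for rank-three matrices (Gillis--Glineur and predecessors): for any rank-three nonnegative matrix $M$, $\rank_+(M)$ equals the minimum number of vertices of a convex polygon $T\subset\mbR^2$ with $A\subseteq T\subseteq B$, where $A$ is the convex hull of the affine image of the rows of $M$ and $B$ is the dual polygon imposed by the nonnegativity of the right factor. Since the rows of $Q_1$, expressed in the basis $\{\mathbf{1},\mathbf{c},\mathbf{c}^{(2)}\}$, carry the affine coordinates $(c_i,c_i^2)$, the inner polygon $A$ is exactly the convex hull of $n$ points on a parabolic arc. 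Choosing the $c_i$'s so that the outer polygon $B$ also hugs the parabola tightly, the question reduces to: how few vertices can a convex polygon have if it encloses $n$ prescribed points on a strictly convex curve while remaining inside a thin enclosing polygon? A classical curve-approximation estimate shows that any such polygon has $\Omega(\sqrt{n})$ vertices, and a careful calibration of constants (as in Shitov's construction) sharpens this to $2\sqrt{n}-2$.

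The main obstacle is precisely this last step: picking the $c_i$'s, pinning down the outer polygon $B$, and bounding how many parabolic data points a single edge of $T$ can absorb without exiting $B$. Following Shitov, I would fix an explicit (or suitably generic) sequence $c_i$, translate the nonnegativity constraints of $H$ into concrete halfplane conditions for $B$, and then prove a chord-counting lemma showing that any edge of a nested polygon covers only $O(\sqrt{n})$ of the parabolic points; summing over the $r$ edges of $T$ and matching against $n$ then yields $r\ge 2\sqrt{n}-2$.
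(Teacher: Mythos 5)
The paper itself offers no proof of this Fact; it is imported wholesale from \cite{shitov2019euclidean}, so your attempt can only be measured against Shitov's own argument. The two easy parts of your plan are correct and standard: expanding $q_{ij}=c_i^2-2c_ic_j+c_j^2$ gives $\rank(Q_1)=3$ for $n\ge 3$ distinct reals, and the rank-one PSD factors built from $(1,c_i)$ and $(c_j,-1)$ give $\prank(Q_1)\le 2$, with $\prank(Q_1)\ge\sqrt{\rank(Q_1)}$ closing that bound.

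The gap is in the third, substantive part. The nested-polygon characterization you invoke (Gillis--Glineur) characterizes the \emph{restricted} nonnegative rank, i.e., factorizations $Q_1=UV$ in which the left factor is additionally required to satisfy $\rank(U)=\rank(Q_1)=3$; only under that restriction do the columns of $U$ lie in the two-dimensional affine hull of the columns of $Q_1$ and yield a nested polygon with at most $r$ vertices. In an unrestricted factorization with $r$ rank-one terms, the intermediate polytope spanned by the columns of $U$ can be $(r-1)$-dimensional, and its section by that affine plane is a nested polygon that may have far more than $r$ vertices, so a vertex-count lower bound for nested polygons does not lower bound $\rank_+(Q_1)$; Gillis and Glineur themselves exhibit rank-3 matrices where the two quantities differ. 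This distinction is exactly where the difficulty of the statement lives: in the restricted picture the bound is easy and linear --- after normalization the columns of $Q_1$ lie on a strictly convex conic arc and, because $q_{ii}=0$, on the boundary of the outer polygon, so each edge of a nested polygon contains at most two of them and one gets at least $n/2$, not $\Theta(\sqrt n)$ --- whereas the theorem concerns the unrestricted nonnegative rank, for which only $2\sqrt n-2$ is known and only for a suitable choice of the $c_i$. Shitov's proof is correspondingly not a curve-approximation argument: it picks the $c_i$ generically (the existential quantifier in the statement is essential, not a calibration detail) and runs an algebraic/combinatorial counting argument on an arbitrary nonnegative factorization. As written, your sketch lower-bounds the wrong quantity and treats the choice of the $c_i$ as an afterthought, so the key step of the proof is missing.
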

    We choose such a $Q_1$ with $q_{i,j}>0$ for any $i\neq j$, and let $\tilde{Q}_1=Q_1/\Vert Q_1\Vert_1$, then $\tilde{Q}_1$ is a classical correlatio,n with $\rank_+(\tilde{Q}_1)\geq2\sqrt{n}-2$. The above fact indicates that when generating $\tilde{Q}_1$, a quantum scheme enjoys remarkable advantages over any classical ones, as the cost of the former can be only one single qubit, while the latter needs at least classical resource of $\lceil\log n\rceil$ bits.

    We now consider $\tilde{Q}_2=\tilde{Q}_1\otimes\tilde{Q}_1$, which is a classical correlation of size $n^2\times n^2$, and similarly for any positive integer $k$, we define $\tilde{Q}_k=\tilde{Q}_1^{\otimes k}$. Since $\prank(A\otimes B)\leq \prank(A)\cdot\prank(B)$ for any nonnegative matrices $A$ and $,B$, we have that $\prank(\tilde{Q}_2)\leq 4$ (actually it is not hard to see that $\prank(\tilde{Q}_2)= 4$), thus a purely quantum scheme only needs a quantum seed of size 2 qubits to generate $\tilde{Q}_2$. To study classical-quantum hybrid protocols generating $\tilde{Q}_2$, we now assume that $s=1$, i.e., our quantum capability is only one qubit, thus we cannot generate $\tilde{Q}_2$ using a purely quantum scheme directly. Then we turn to classical-quantum hybrid protocols to produce $\tilde{Q}_2$, and we are interested in the minimum classical resource needed. According to Theorem 1, we have to estimate $\lceil\log_2{{\tt rank}_{\tt psd}^{(2)}}(\tilde{Q}_2)\rceil$. We now prove the following conclusion.
    \begin{prop}\label{ex1}
    ${{\tt rank}_{\tt psd}^{(2)}}(\tilde{Q}_2)\geq \log n$.
    \end{prop}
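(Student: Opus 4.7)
My plan is to proceed by contradiction, transferring a small hybrid decomposition of $\tilde{Q}_2$ into a nonnegative rank bound on $\tilde{Q}_1$ that contradicts the EDM lower bound $\rank_+(\tilde{Q}_1)\ge 2\sqrt{n}-2$. Concretely, assume $r:={\tt rank}_{\tt psd}^{(2)}(\tilde{Q}_2)<\log n$; by definition we can write $\tilde{Q}_2=\sum_{i=1}^{r}p_i P_i$ with $\{p_i\}$ a probability distribution and $\prank(P_i)\le 2$ for every $i$. Note that the existing Facts give only $r\ge \lceil\prank(\tilde{Q}_2)/2\rceil=2$ and $r\ge \lceil\rank(\tilde{Q}_2)/4\rceil=3$, so a new technique is required.

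The first ingredient is a structural analysis of $\prank\le 2$ matrices: each $P_i(x,y)=\tr(C^i_x D^i_y)$ with $C^i_x,D^i_y$ being $2\times 2$ PSD, and spectrally decomposing these factors writes $P_i$ as a sum of at most four squared-bilinear-form matrices, forcing $\rank(P_i)\le 4$ and placing each $P_i$ in a form to which Shitov-style EDM nonnegative rank arguments apply. The second ingredient is the tensor structure $\tilde{Q}_2=\tilde{Q}_1\otimes\tilde{Q}_1$: fixing any $(i_2^*,j_2^*)$ with $\tilde{Q}_1(i_2^*,j_2^*)>0$, the submatrix of $\tilde{Q}_2$ indexed by rows $\{(\cdot,i_2^*)\}$ and columns $\{(\cdot,j_2^*)\}$ equals $\tilde{Q}_1(i_2^*,j_2^*)\cdot\tilde{Q}_1$, while the corresponding submatrices of the $P_i$ retain $\prank\le 2$. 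Slicing in various ways and combining these restrictions, I would try to produce a bound of the shape $\rank_+(\tilde{Q}_1)\le 2^{O(r)}$; comparing with $\rank_+(\tilde{Q}_1)\ge 2\sqrt{n}-2$ then yields $r\ge \Omega(\log n)$.

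The main obstacle is making the exponential amplification $\rank_+(\tilde{Q}_1)\le 2^{O(r)}$ quantitative. The logarithmic nature of the bound indicates that each classical-communication bit in the hybrid protocol accounts for only a constant-factor multiplication in the effective nonnegative rank of the base factor $\tilde{Q}_1$; rigorously tracking this amplification through both the PSD-rank-$2$ factorizations and the tensor-slicing step---likely using EDM-specific arithmetic in the vein of the Shitov lower bound---is the technical heart of the argument. An alternative route is to directly lower bound $\rank_+(\tilde{Q}_2)$ via a tensor amplification of the EDM argument, upper bound each $\rank_+(P_i)$ using dimension-based considerations for $\prank\le 2$ matrices of size $n^2\times n^2$, and then combine via $\rank_+(\tilde{Q}_2)\le r\cdot\max_i\rank_+(P_i)$ to extract the logarithmic lower bound on $r$.
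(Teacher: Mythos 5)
Your proposal has a genuine gap, and it sits exactly where you locate it yourself: the ``exponential amplification'' $\rank_+(\tilde{Q}_1)\le 2^{O(r)}$ is never established, and neither of your two routes can plausibly deliver it. Slicing to a single block $(i_2^*,j_2^*)$ gives a decomposition $\tilde{q}_{i_2^*,j_2^*}\tilde{Q}_1=\sum_k P_k^{(i_2^*,j_2^*)}$ into $r$ pieces with $\prank\le 2$, hence $\rank\le 4$; but matrices of rank $3$ or $4$ can have arbitrarily large nonnegative rank (indeed $\tilde{Q}_1$ itself has $\rank=3$, $\prank=2$ and $\rank_+\ge 2\sqrt{n}-2$ --- that is the whole point of the Shitov example), so per-piece you get no useful $\rank_+$ bound, only the Cohen--Rothblum fact that pieces of rank $\le 2$ have $\rank_+\le 2$. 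The same problem kills your alternative route: in $\rank_+(\tilde{Q}_2)\le r\cdot\max_i\rank_+(P_i)$ the factor $\max_i\rank_+(P_i)$ is not $O(1)$ for $\prank\le 2$ matrices, and $\rank_+(\tilde{Q}_2)\le\rank_+(\tilde{Q}_1)^2$ is only polynomial in $n$, so a linear counting bound of this shape can at best give a constant lower bound on $r$, never $\log n$. A logarithm cannot come out of summing per-piece nonnegative ranks; it has to come from a covering-type argument.

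The paper's proof supplies precisely the combinatorial mechanism you are missing, and it works inside a single piece $P_k$ rather than on one slice at a time. First, as you observe, if every $(i,j)$-block of every $P_k$ had rank $\le 2$, then $\tilde{q}_{i,j}\tilde{Q}_1$ would have $\rank_+<2\log n$, contradicting $\rank_+(\tilde{Q}_1)\ge 2\sqrt{n}-2$; so every off-diagonal block position $(i,j)$ is ``hit'' by some $P_{k}$ whose block there has rank $3$ or $4$ (equivalently $\prank$ exactly $2$). The key step you do not have is positional: within one fixed $P_k$ of $\prank\le 2$, two such blocks cannot sit in a staircase pattern, since
\begin{equation*}
\prank\begin{pmatrix}\begin{bmatrix} P_k^{(i,j)} & *\\ 0 & P_k^{(i',j')}\end{bmatrix}\end{pmatrix}=4,
\end{equation*}
so the positions of the rank-$\ge 3$ blocks of $P_k$ form a combinatorial rectangle $A_k\times B_k$ with $A_k\cap B_k=\emptyset$, i.e.\ a $1$-rectangle of the non-equality matrix. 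Since these $r$ rectangles must cover all off-diagonal block positions, $r$ is at least the rectangle cover number of the complement of the identity, which is $\ge\log n$. You would need to discover and prove this ``forbidden staircase'' observation and then import the $\log n$ cover-number bound; without it, your outline reduces the problem to a quantitative claim that is false as stated for rank-$3$/$4$ pieces, so the argument as proposed does not go through.
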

    \begin{proof}
    Denote the $(i,j)$-entry of $\tilde{Q}_1$ by $\tilde{q}_{i,j}$, i.e., $\tilde{q}_{i,j}=\tilde{Q}_1(i,j)$. Then
     \begin{equation}\label{eq:blcok}
    \tilde{Q}_2=\tilde{Q}_1\otimes\tilde{Q}_1=\begin{bmatrix}
    0 & \tilde{q}_{1,2}\tilde{Q}_1 & \dots & \tilde{q}_{1,n}\tilde{Q}_1\\
    \tilde{q}_{2,1}\tilde{Q}_1 & 0 & \dots & \tilde{q}_{2,n}\tilde{Q}_1\\
    \vdots & \vdots & \ddots & \vdots \\
    \tilde{q}_{n,1}\tilde{Q}_1 & \tilde{q}_{n,2}\tilde{Q}_1 & \dots & 0
    \end{bmatrix}.
    \end{equation}
    For the convenience of later discussion, we call $\tilde{q}_{i,j}\tilde{Q}_1$ the $(i,j)$-th block of $\tilde{Q}_2$ when $i\neq j$, and apparently for any $i\in[n]$ the $(i,i)$-th block is a zero matrix. For any other matrix $M$ with the same size $n^2\times n^2$, we also use this term to address the corresponding submatrix of $M$ with exactly the same position. Suppose $\tilde{Q}_2=\sum_{k=1}^rP_k$, where $P_k$ is a nonnegative matrix and {$\prank(P_k)\le 2$} for any $k\in[r]$. Then we need to prove that $r\geq \log n$.

    Suppose $r<\log n$. Then we claim that for any $i\neq j$, there must be an integer $k_0\in[r]$ such that the $(i,j)$-th block of $P_{k_0}$ has rank $3$ or $4$. This can be proved as below. Suppose this is not the case, i.e., for any $k\in[r]$, the rank of the $(i,j)$-th block of $P_{k}$ is $1$ or $2$, then according to the fact that for any rank-$2$ nonnegative matrix $A$ it holds that $\rank_+(A)=2$~\cite{cohen1993nonnegative}, the summation of the $(i,j)$-th blocks of all $P_{k}$ has a nonnegative rank smaller than $2\log n$. However, this summation is actually $\tilde{q}_{i,j}\tilde{Q}_1$, whose nonnegative rank is at least $2\sqrt{n}-2$, much larger than $2\log n$, which is a contradiction. Therefore, for any block, there exists $k\in[r]$ such that this block of $P_k$ has rank $3$ or $4$.

    We now fix an arbitrary $k\in[r]$, and focus on {the blocks of $P_k$ which have rank $3$ or $4$}. We claim that all these blocks can be covered by a \emph{position rectangle}, which will be explained later. Suppose the $(i,j)$-th and the $(i',j')$-th blocks, denoted $P_k^{(i,j)}$ and $P_k^{(i',j')}$, have rank $3$ or $4$, then they have PSD-rank 2, where we use the fact that $\prank(P_k)=2$ and the relation $\prank(A)\geq\sqrt{\rank(A)}$ for any nonnegative matrix $A$ \cite{gouveia2013lifts}. Note that it holds that
    \begin{equation}\label{eq:2+2}
    \prank\begin{pmatrix}\begin{bmatrix}
    P_k^{(i,j)} & *\\
    0 & P_k^{(i',j')}
    \end{bmatrix}\end{pmatrix}=
    \prank\begin{pmatrix}\begin{bmatrix}
    P_k^{(i,j)} & 0\\
    * & P_k^{(i',j')}\\
    \end{bmatrix}\end{pmatrix}=4,
    \end{equation}
    where the star can be any $n\times n$ nonnegative matrix. Since $\prank(P_k)=2$, this means that the locations of the blocks of $P_k$ with rank $3$ or $4$ have to be well-organized, and the patterns in Eq.\eqref{eq:2+2} cannot exist. Let $A = \{i\in[n]:\exists j\in[n] \text{ such that the }(i,j)\text{-th block has rank 3 or 4}\}$, and $B = \{j\in[n]:\exists i\in[n] \text{ such that the }(i,j)\text{-th block has rank 3 or 4}\}$. Then the observation given by Eq.\eqref{eq:2+2} implies that $A\cap B=\emptyset$. Therefore, if we can call the set $A\times B$ a {position rectangle}, then it covers all the positions of the blocks of $P_k$ with rank $3$ or $4$, and note also that the position rectangle does not contain any diagonal blocks.

    We now consider the corresponding position rectangles for all $P_k$. It can be seen that these rectangles may have overlap, but they need to cover all the off-diagonal blocks of $\tilde{Q}_2$, because of the fact that for each off-diagonal block there exists a $k_0\in[r]$ such that the corresponding block of $P_{k_0}$ has rank $3$ or $4$. Therefore, $r$ should be at least the minimum number of monochromatic-1 rectangles needed to cover all the 1s in the communication matrix of the inequality function, which means $r\geq \log n$~\cite{Nisan97}. This is contradicted to the assumption $r<\log n$. This completes the proof.
    \end{proof}
    {Therefore, to compensate the single-qubit shortage of quantum resource in generating $\tilde{Q}_2$, one has to consume classical resource of $\log\log n$ bits roughly, even with the quantum capability of one qubit. Note that here $n$ could be any positive integer, making a sharp difference from the example in Eq.\eqref{eq:diagonal}.}

    In fact, using the similar technique, we can strengthen the conclusion in the following two different ways. These facts on $\tilde{Q}_2$ clearly reveals the rich mathematical structure of classical-quantum hybrid protocols and $k$-block positive semi-definite rank. {Indeed, the first corollary below shows that when $n$ is large, even if the quantum capability is qutrit, i.e., only one dimension smaller than 2 qubits, any classical-quantum hybrid protocol that produces $\tilde{Q}_2$ will need a large amount of classical resource.}
    \begin{corollary}\label{corollary1}
    ${{\tt rank}_{\tt psd}^{(3)}}(\tilde{Q}_2)\geq \log n$.
    \end{corollary}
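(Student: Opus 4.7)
The plan is to follow the proof of Proposition~\ref{ex1} step by step, replacing the bound $\prank(P_k)\le 2$ with $\prank(P_k)\le 3$ and verifying that the arithmetic still produces a contradiction at each decisive estimate. Write $\tilde{Q}_2=\sum_{k=1}^{r}P_k$ with $P_k$ nonnegative and $\prank(P_k)\le 3$ for all $k$, and assume toward contradiction that $r<\log n$.

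Step one, the existence of a ``heavy'' block at each off-diagonal position, transfers verbatim. If every $P_k^{(i,j)}$ had rank at most $2$, then $\rank_+(P_k^{(i,j)})\le 2$ for each $k$ by the rank-$2$ nonnegative-rank identity of~\cite{cohen1993nonnegative}, and summing would give $\rank_+(\tilde{q}_{i,j}\tilde{Q}_1)\le 2r<2\log n$, contradicting $\rank_+(\tilde{Q}_1)\ge 2\sqrt{n}-2$ for large $n$. Thus for every off-diagonal $(i,j)$ some $P_{k_0}^{(i,j)}$ has rank at least $3$.

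Step two is where the genuine modification occurs, and I expect this to be the main obstacle. Call $P_k^{(i,j)}$ \emph{heavy} if its rank is at least $3$; the inequality $\prank\ge\sqrt{\rank}$ from~\cite{gouveia2013lifts} combined with integrality forces $\prank(P_k^{(i,j)})\ge 2$ for every heavy block. Define $A_k=\{i:\exists j,\ P_k^{(i,j)}\text{ heavy}\}$ and $B_k=\{j:\exists i,\ P_k^{(i,j)}\text{ heavy}\}$. Suppose some $i\in A_k\cap B_k$, witnessed by heavy $P_k^{(i,j)}$ and $P_k^{(i',i)}$; since the diagonal blocks of $\tilde{Q}_2$ vanish and the $P_k$ are nonnegative, $P_k^{(i,i)}=P_k^{(i',i')}=0$, so the $2\times 2$ block submatrix of $P_k$ on row-blocks $\{i,i'\}$ and column-blocks $\{j,i\}$ has precisely the block-triangular shape of Eq.\eqref{eq:2+2}. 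The same estimate used there then gives this submatrix PSD-rank at least $\prank(P_k^{(i,j)})+\prank(P_k^{(i',i)})\ge 2+2=4$, while its PSD-rank is at most $\prank(P_k)\le 3$, a contradiction. Hence $A_k\cap B_k=\emptyset$ and $A_k\times B_k$ is a monochromatic-$1$ rectangle of the inequality function $\mathrm{NEQ}$.

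Step three recycles the cover argument unchanged. By step one the $r$ rectangles $\{A_k\times B_k\}$ together cover all off-diagonal block positions, so $r$ is at least the rectangle cover number of $\mathrm{NEQ}$ on $[n]\times[n]$, namely $\lceil\log n\rceil$~\cite{Nisan97}, contradicting $r<\log n$. The only delicate point is confirming that the bound behind Eq.\eqref{eq:2+2} is genuinely additive in the PSD-ranks of the diagonal blocks rather than an ad hoc ``$=4$'' tied to the PSD-rank-$2$ case; inspecting the short argument there, it only needs that both diagonal blocks have PSD-rank at least $2$, which is exactly what ``heavy'' supplies.
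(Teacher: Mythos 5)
Your proposal is correct and follows essentially the same route as the paper: rerun the argument of Proposition~\ref{ex1} with $\prank(P_k)\le 3$, note that every block of rank at least $3$ still has PSD-rank at least $2$ (by $\prank\ge\sqrt{\rank}$), so the block-triangular pattern of Eq.~\eqref{eq:2+2} would force PSD-rank at least $4>3$ and hence still cannot occur, and then reuse the position-rectangle cover bound for the inequality function. Your explicit check that the bound behind Eq.~\eqref{eq:2+2} only needs both diagonal blocks to have PSD-rank at least $2$ is exactly the point the paper's one-line proof relies on implicitly.
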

    \begin{proof} The proof is almost the same with the previous proposition, except that now the blocks $P_k^{(i,j)}$ and $P_k^{(i',j')}$ introduced above can have PSD-rank 2 or 3, but the patterns in Eq.\eqref{eq:2+2} still cannot exist. Therefore, the proof still works.
    \end{proof}
    At the same time, the following corollary implies that for any positive integer $k$, there always exist classical correlations $P$ such that the cost of a purely quantum scheme to sample $P$ is $k$ qubits, but if the quantum capacity is $k-1$ qubits, i.e., a shortage of one single qubit for a purely quantum scheme, {then in any classical-quantum hybrid protocol sampling $P$ a large amount of classical resource has to be needed}.

    \begin{corollary}\label{corollary2}
    For any positive integer $k\geq2$, ${{\tt rank}_{\tt psd}^{(2^{k-1})}}(\tilde{Q}_k)\geq \log n$.
    \end{corollary}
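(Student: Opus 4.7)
The plan is to induct on $k$, with Proposition~\ref{ex1} supplying the base case $k=2$. For the inductive step (treating $k\geq 3$), I would assume ${\tt rank}_{\tt psd}^{(2^{k-2})}(\tilde{Q}_{k-1})\geq\log n$ and consider an arbitrary decomposition $\tilde{Q}_k=\sum_{l=1}^r P_l$ with $\prank(P_l)\leq 2^{k-1}$; it then suffices to show $r\geq\log n$. Using $\tilde{Q}_k=\tilde{Q}_1\otimes\tilde{Q}_{k-1}$, I view $\tilde{Q}_k$ as an $n\times n$ array of blocks of size $n^{k-1}\times n^{k-1}$ and write $P_l^{(i,j)}$ for the $(i,j)$-block of $P_l$; the $(i,j)$-block of $\tilde{Q}_k$ itself equals $\tilde{q}_{i,j}\tilde{Q}_{k-1}$, so the diagonal blocks of $\tilde{Q}_k$ vanish and nonnegativity forces $P_l^{(i,i)}=0$ for every $l$ and $i$.

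The argument then splits on whether there is an off-diagonal $(i^*,j^*)$ at which \emph{every} $P_l^{(i^*,j^*)}$ has $\prank\leq 2^{k-2}$. If such an $(i^*,j^*)$ exists (Case A), dividing $\sum_l P_l^{(i^*,j^*)}=\tilde{q}_{i^*,j^*}\tilde{Q}_{k-1}$ by the positive scalar $\tilde{q}_{i^*,j^*}$ exhibits a $2^{k-2}$-block positive semi-definite factorization of $\tilde{Q}_{k-1}$ with $r$ summands, and the inductive hypothesis immediately yields $r\geq\log n$. Otherwise (Case B), for every off-diagonal $(i,j)$ some $l$ satisfies $\prank(P_l^{(i,j)})\geq 2^{k-2}+1$; call such a block \emph{big}. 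Adapting the pattern argument from Proposition~\ref{ex1}, I fix $l$, set $A_l=\{i:\exists j,\,P_l^{(i,j)}\text{ big}\}$ and $B_l=\{j:\exists i,\,P_l^{(i,j)}\text{ big}\}$, and establish $A_l\cap B_l=\emptyset$. If some $i\in A_l\cap B_l$, there would be $j,j'\neq i$ with $P_l^{(i,j)}$ and $P_l^{(j',i)}$ both big; the submatrix of $P_l$ selected by row-blocks $\{i,j'\}$ and column-blocks $\{j,i\}$ then matches the shape in Eq.~\eqref{eq:2+2} because $P_l^{(i,i)}=0$, so its PSD-rank is at least $\prank(P_l^{(i,j)})+\prank(P_l^{(j',i)})\geq 2(2^{k-2}+1)=2^{k-1}+2$, contradicting $\prank(P_l)\leq 2^{k-1}$.

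With $A_l\cap B_l=\emptyset$ in hand, each $A_l\times B_l$ is a rectangle disjoint from the diagonal, and Case B guarantees that $\bigcup_{l=1}^r A_l\times B_l$ covers every off-diagonal entry of $[n]\times[n]$; this is a monochromatic-$1$ cover of the inequality function, whose cover number is $\lceil\log n\rceil$, and hence $r\geq\log n$. The main technical point, where I expect most of the verification effort, is the calibration of the bigness threshold to $2^{k-2}+1$: this value is simultaneously large enough that two big blocks in the forbidden pattern force $\prank(P_l)>2^{k-1}$ in Case B, and small enough that the complementary condition in Case A aligns exactly with the $2^{k-2}$-block inductive hypothesis for $\tilde{Q}_{k-1}$. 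This interlocking between the halving of the quantum budget in the statement and the tensor factorization $\tilde{Q}_k=\tilde{Q}_1\otimes\tilde{Q}_{k-1}$ is the essential structural reason the induction closes.
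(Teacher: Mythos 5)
Your proposal is correct and takes essentially the same route as the paper's proof: induction on $k$ with Proposition~\ref{ex1} as the base case, the block view of $\tilde{Q}_k=\tilde{Q}_1\otimes\tilde{Q}_{k-1}$, the threshold $\prank>2^{k-2}$ for ``big'' blocks, the Eq.~\eqref{eq:2+2}-type position-rectangle argument giving $A_l\cap B_l=\emptyset$, and the $\log n$ lower bound on monochromatic-$1$ rectangle covers of the inequality matrix. The only differences are presentational: you organize the step as a case analysis (all-small position versus big blocks everywhere) instead of the paper's global contradiction, and you make explicit the block-triangular PSD-rank additivity that the paper invokes implicitly.
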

    \begin{proof} We prove it by induction. First, according to Proposition \ref{ex1}, we know that it is true when $k=2$. We suppose it holds when $k=i_0$, i.e., ${{\tt rank}_{\tt psd}^{(2^{i_0-1})}}(\tilde{Q}_{i_0})\geq \log n$, and we now focus on ${{\tt rank}_{\tt psd}^{(2^{i_0})}}(\tilde{Q}_{i_0+1})$. Since $\tilde{Q}_{i_0+1}$ can be expressed in a similar way as Eq.\eqref{eq:blcok}, for convenience we also use the term of the $(i,j)$-th block to address the corresponding submatrix, except that now it is not $\tilde{q}_{i,j}\tilde{Q}_1$, but $\tilde{q}_{i,j}\tilde{Q}_{i_0}$. Again we suppose $\tilde{Q}_{i_0+1}=\sum_{k=1}^rP_k$, where $P_k$ is a nonnegative matrix and $\prank(P_k)\leq 2^{i_0}$ for any $k\in[r]$. And we need to prove that $r\geq \log n$.

    Suppose $r<\log n$. Then for any $i\neq j$, there must be an integer $k_0\in[r]$ such that the $(i,j)$-th block of $P_{k_0}$, denoted $P_{k_0}^{(i,j)}$, has PSD-rank larger than $2^{i_0-1}$. If this is not true, then $\sum_{k=1}^rP_{k}^{(i,j)}$, which is actually $\tilde{q}_{i,j}\tilde{Q}_{i_0}$, can be a summation of $r<\log n$ nonnegative matrices with each having PSD-rank not larger than $2^{i_0-1}$, contradicted with the assumption that ${{\tt rank}_{\tt psd}^{(2^{i_0-1})}}(\tilde{Q}_{i_0})\geq \log n$.

    Then again we fix a $k\in[r]$ and look at the blocks of $P_k$ with PSD-rank larger than $2^{i_0-1}$. By a similar observation as Eq.\eqref{eq:2+2}, we know that these special blocks of $P_k$ also appear in a similar pattern as the blocks with rank $3$ or $4$ in the case of $\tilde{Q}_2$, and their positions can also covered by a position rectangle. Therefore, a similar argument proves that we must have $r\geq\log n$.

    \end{proof}







    \subsection{The quantum-classical hybrid}

    In classical-quantum hybrid protocols, {the major restriction on exploiting quantum power is the size of available quantum states}. Within the quantum capability, we have the freedom to control and manipulate any quantum state. Particularly, when producing a classical correlation, with respect to the classical sampling result $i$ in the first stage, we are able to ask for any corresponding quantum state $\rho_i$. However, sometimes this kind of freedom is still expensive to us. For this, we now consider a new hybrid protocol with more rigorous restrictions, that is, only one quantum state independent of classical messages is available for the players, and thus the classical-quantum hybrid protocols introduced above do not work any more. Since the quantum state is fixed, we can choose its preparation as the first action, and hence call the new protocol a \emph{quantum-classical hybrid} one.


    Given one single copy of shared quantum state, say $\rho$, one may think of utilizing it in the following natural way: Based on the shared state, \alice and \bob produce a classical correlation $P'$. After sampling $x'$ and $y'$ according to $P'$, both of them make two proper local classical samplings accordingly, then give their outputs $x$ and $y$, hoping that the final output is exactly distributed corresponding to the target $P$. However, it can be argued that, this is not possible in general. Indeed, since the second stage is a classical local sampling for each party, each operation can be regarded as a special form of quantum operation. Then if the above protocol is possible, each party can merge this special quantum operation into the local quantum operation he/she performs when producing $P'$, resulting in a valid composite quantum operation. Therefore, based on the original seed quantum state of size $s$, \alice and \bob is able to generate $P$ directly, which is a contradiction.

    Due to this observation, one may wonder, with such a rigourous restriction on quantum resource available, whether quantum can make essential contributions or not in this task? It turns out the answer to this question is {again affirmative}. To explain why this is the case, we first recall two useful facts.

    First, if we choose all bipartite quantum states $\rho_i$ involved in a classical-quantum hybrid protocol to be pure, we still have the same power in generating classical correlations, even if the quantum capability is unchanged~\cite{sikora2016minimum}. Second, we also need the following well-known result by Nielsen.
    \begin{fact}\cite{nielsen1999conditions}\label{fact:nielsen}
    $\ket\Psi$ and $\ket\Phi$ are two $d\times d$ bipartite pure quantum states, and $\lambda_{\Psi}$ and $\lambda_{\Phi}$ are the vectors of their Schmidt coefficients respectively. Then $\ket\Psi$ can be transformed to $\ket\Phi$ using local operations and classical communication (LOCC) if and only if $\lambda_{\Psi}$ is majorized by $\lambda_{\Phi}$.
    \end{fact}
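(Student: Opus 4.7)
The plan is to prove the two directions of the equivalence separately. Write the two states in their Schmidt bases as $\ket\Psi = \sum_{i=1}^d \sqrt{\alpha_i}\,\ket{i}_A\ket{i}_B$ and $\ket\Phi = \sum_{i=1}^d \sqrt{\beta_i}\,\ket{i}_A\ket{i}_B$, with Schmidt coefficients sorted in decreasing order so that $\lambda_\Psi = (\alpha_1,\ldots,\alpha_d)$ and $\lambda_\Phi = (\beta_1,\ldots,\beta_d)$. The goal is to show that the LOCC convertibility $\ket\Psi \to \ket\Phi$ is equivalent to $\sum_{i=1}^k \alpha_i \le \sum_{i=1}^k \beta_i$ for every $k \in [d]$, with equality at $k=d$.

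For the necessity direction I would first appeal to a reduction due to Lo and Popescu: any deterministic LOCC conversion between pure bipartite states may be realized by a one-round protocol in which \alice performs a measurement with Kraus operators $\{A_k\}$ satisfying $\sum_k A_k^\dagger A_k = I$, communicates the outcome $k$ (occurring with probability $p_k$) to \bob, who applies a conditional unitary $V_k$. The deterministic success condition is $(A_k \otimes V_k)\ket\Psi = \sqrt{p_k}\,\ket\Phi$; using the standard matrix-vector correspondence and writing $\ket\Psi,\ket\Phi$ as the diagonal matrices $M_\Psi = \mathrm{diag}(\sqrt{\alpha})$ and $M_\Phi = \mathrm{diag}(\sqrt{\beta})$, this becomes the algebraic identity $A_k M_\Psi V_k^T = \sqrt{p_k}\, M_\Phi$. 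Solving for $A_k$ and substituting into the completeness relation yields $M_\Psi^2 = \sum_k p_k V_k^T M_\Phi^2 \overline{V_k}$, and reading off the $(i,i)$-entry gives $\alpha_i = \sum_j D_{ij}\beta_j$ with $D_{ij} := \sum_k p_k |V_k(j,i)|^2$. Since $D$ is a convex combination of the doubly stochastic matrices $|V_k|^2$, it is itself doubly stochastic, and the Hardy--Littlewood--P\'olya theorem delivers $\lambda_\Psi \prec \lambda_\Phi$.

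For the sufficiency direction I would argue constructively. Recall the classical fact that if $\alpha \prec \beta$, then $\alpha$ is obtainable from $\beta$ by a finite sequence of T-transforms, that is, doubly stochastic matrices acting as the identity outside a single $2\times 2$ principal block. It therefore suffices to implement a single T-transform on coordinates $i < j$ via LOCC and to concatenate. Such a T-transform can be realized as follows: \alice performs a two-outcome measurement whose Kraus operators are supported on the span of $\ket{i}$ and $\ket{j}$, with parameters tuned so that outcome $1$ leaves the joint state exactly in the Schmidt form of the intermediate target, while outcome $2$ leaves it in the same form up to swapping the $i$-th and $j$-th Schmidt vectors. \alice then transmits her one-bit outcome to \bob, who applies either the identity or the swap $\ket{i}\leftrightarrow\ket{j}$ as a correction. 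Verifying the Kraus completeness relation becomes a short $2\times 2$ calculation, and chaining the protocols for each T-transform in the decomposition realizes the full conversion $\ket\Psi \to \ket\Phi$.

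The main technical obstacle is the Lo--Popescu reduction invoked in the necessity direction, which requires establishing that multi-round two-way LOCC brings no additional power beyond a single round of one-way communication when the target is a deterministic pure-to-pure conversion; a full argument uses purification together with a careful deferment of intermediate measurements and classical messages, and in a write-up I would cite it rather than reprove it. Modulo this reduction, the remaining work is linear algebra and an appeal to the classical T-transform decomposition of doubly stochastic matrices.
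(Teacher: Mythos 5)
The paper does not prove this statement at all: it is imported as a Fact with a citation to Nielsen's paper and used purely as a black box (to justify converting $s$ shared EPR pairs into the desired seed state $\rho_i$ by LOCC in the quantum-classical hybrid protocol). So there is no in-paper argument to compare against; what you have written is a reconstruction of the canonical proof from the literature, and in outline it is correct. Three caveats on your sketch. First, the Lo--Popescu reduction to one-round, one-way protocols is itself a nontrivial lemma that you (reasonably) invoke by citation, so your write-up, like the paper, ultimately leans on external results; that is fine, but it should be stated as such. Second, solving $A_k M_\Psi V_k^T = \sqrt{p_k}\,M_\Phi$ for $A_k$ requires $M_\Psi$ (and $V_k^T$) to be invertible; when some Schmidt coefficients of $\ket\Psi$ vanish you must restrict to the support of $M_\Psi$ (or perturb and take limits), since otherwise $M_\Psi^{-1}$ does not exist, although the doubly stochastic conclusion survives this repair. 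Third, in the sufficiency step the two-outcome Kraus operators implementing a T-transform cannot literally be ``supported on the span of $\ket{i}$ and $\ket{j}$''---they must act proportionally to the identity on the complementary coordinates, else they annihilate the other Schmidt components and the completeness relation $\sum_k A_k^\dagger A_k = I$ fails; with that standard block-diagonal form, the $2\times 2$ verification and the chaining over the T-transform decomposition go through as you describe.
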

    Suppose $\lambda_{\Psi}=(\lambda_{\Psi,1},...,\lambda_{\Psi,d})$ and $\lambda_{\Phi}=(\lambda_{\Phi,1},...,\lambda_{\Phi,d})$ are real $d$-dimensional vectors. We say $\lambda_{\Psi}$ is majorized by $\lambda_{\Phi}$ if for any $k\in[d]$, i.e.,
    \[\sum_{i=1}^{k}\lambda_{\Psi,i}^{\downarrow}\le \sum_{i=1}^{k}\lambda_{\Phi,i}^{\downarrow},\]
    with equality holding when $k=d$, and here the $\downarrow$ indicates the descending order of the eigenvalues.

    For example, {if} \alice and \bob share $s$ Einstein-Podolsky-Rosen (EPR) pairs, i.e., a pair of qubits which are in a maximally entangled state, then as a whole bipartite pure state the corresponding vector of Schmidt coefficients is $\lambda_{s-EPR}=(2^{-s},2^{-s},...,2^{-s})$. Then, for any $2^s\times 2^s$ pure quantum state $\ket\Phi$, it is easy to check that $\lambda_{s-EPR}$ is majorized by $\lambda_{\Phi}$.

    With the above two facts, we can design a quantum-classical hybrid protocol to generate a target classical correlation $P$ as below. Suppose that an optimal classical-quantum hybrid protocol that generates $P$ corresponds to a decomposition $P=\sum_{i\in I}p_iP_i$, and for any $i\in I$, $P_i$ can be produced quantumly using a bipartite quantum state $\rho_i$ within quantum capability $s$. According to the above discussion, we can assume that all $\rho_i$ are pure. Then in a quantum-classical hybrid protocol, \alice and \bob first share $s$ EPR pairs, which is within quantum capability. Next they sample an integer $i\in I$ classically with respect to the distribution $\{p_i\}$. After obtaining shared $i$, they transform the $s$ EPR pairs into $\rho_i$ using LOCC. According to Fact \ref{fact:nielsen}, this can be fulfilled with certainty, though needs some classical communication. Then they are able to sample $P_i$ by performing local quantum operations on $\rho_i$. It is not hard to see that the overall output will be exactly a sampling of $P$, as in a classical-quantum hybrid protocol.



    It can be seen that the resource consumptions in a quantum-classical hybrid protocol are quite similar with those in the corresponding classical-quantum hybrid protocol, except some extra classical communication is needed in the part that transforms $s$ EPR pairs into $\rho_i$, {which turns out to be at most $2^s-1$ bits~\cite{nielsen1999conditions}}. Therefore, we have the following conclusion.
    \begin{prop}\label{quantum-and-classical}
    Suppose $P$ is a classical correlation with $\prank(P)>2^s$, where $s$ is the quantum capability. Then the classical communication needed in a quantum-classical hybrid protocol to sample $P$ is at most $\lceil\log_2{\tt rank}_{\tt psd}^{(2^s)}(P)\rceil+2^s-1$ bits.
    \end{prop}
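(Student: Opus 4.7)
The plan is to directly convert an optimal classical-quantum hybrid protocol into a quantum-classical hybrid protocol, paying an extra LOCC cost to reshape a fixed maximally entangled seed into the state required at each stage. First, I invoke Theorem~1 to obtain a decomposition $P=\sum_{i\in I}p_i P_i$ with $|I|={\tt rank}_{\tt psd}^{(2^s)}(P)$ such that each $P_i$ is generable from a bipartite seed state of size $s$ qubits. Using the fact cited from~\cite{sikora2016minimum}, I may assume that each $\rho_i=\ket{\psi_i}\bra{\psi_i}$ is pure, so $\ket{\psi_i}$ has Schmidt rank at most $2^s$.

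Next, I set up the quantum-classical protocol as follows. Alice and Bob first prepare $s$ shared EPR pairs as their fixed, input-independent quantum seed; this is within the quantum capability of $s$ qubits. They then draw $i\in I$ from $\{p_i\}$ via public coins, exchanging $\lceil\log_2|I|\rceil=\lceil\log_2{\tt rank}_{\tt psd}^{(2^s)}(P)\rceil$ classical bits. Since the $s$-EPR state has flat Schmidt spectrum $\lambda_{s\text{-EPR}}=(2^{-s},\ldots,2^{-s})$, which is majorized by the Schmidt vector of any pure state on a $2^s\times 2^s$ system, Fact~\ref{fact:nielsen} guarantees the existence of a deterministic LOCC protocol turning the EPR pairs into $\ket{\psi_i}$. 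After this conversion, Alice and Bob apply the local measurements prescribed by the purely quantum subprotocol for $P_i$ and output the resulting samples; by construction the global distribution is exactly $P$.

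The only remaining ingredient is a quantitative bound on the classical communication used by the LOCC conversion. Here I would cite the constructive protocol accompanying Nielsen's theorem in~\cite{nielsen1999conditions}, which implements any majorization-allowed transformation of $2^s$-dimensional bipartite pure states while exchanging at most $2^s-1$ classical bits. Summing the two contributions yields the claimed upper bound $\lceil\log_2{\tt rank}_{\tt psd}^{(2^s)}(P)\rceil+2^s-1$.

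The main subtlety, and the step I would treat most carefully, is ensuring that the LOCC conversion can be made \emph{exact} (not merely approximate or probabilistic) for every $i\in I$ simultaneously under a uniform $2^s-1$ bit bound; this is precisely what the majorization-based deterministic construction of Nielsen provides when the source spectrum is flat and the target dimension equals $2^s$, so no case-splitting or averaging argument is needed. Everything else amounts to assembling the classical-quantum hybrid from Theorem~1 with the entanglement-transformation step, and tallying the costs.
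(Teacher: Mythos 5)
Your proposal is correct and follows essentially the same route as the paper: take the optimal decomposition $P=\sum_i p_iP_i$ from the classical-quantum characterization, purify the seed states, start from $s$ shared EPR pairs, sample $i$ with $\lceil\log_2{\tt rank}_{\tt psd}^{(2^s)}(P)\rceil$ bits, and use Nielsen's majorization theorem (flat spectrum majorized by any $2^s\times 2^s$ Schmidt vector) to convert the EPR pairs into $\rho_i$ by LOCC at a cost of at most $2^s-1$ classical bits. The cost accounting and the key ingredients (purity via~\cite{sikora2016minimum}, deterministic exact conversion via~\cite{nielsen1999conditions}) match the paper's argument.
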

    Consider the facts that for state-of-the-art technology $s$ is still quite small, and that classical communication is relatively cheap, the performance of a quantum-classical hybrid protocol is comparable with that of the corresponding classical-quantum protocol, though it suffers from more rigorous restriction to access quantum {resource}.


    \section{The advantages of shared entanglement over shared randomness in communication complexity}

    {As mentioned before, in communication complexity theory a fundamental open problem is to exhibit and prove the advantages of shared entanglement over shared randomness in computing boolean functions}. Though hybrid protocols for generating classical correlations deal with a different and simpler task, they provide us an angle to look into the advantages of shared entanglement over shared randomness in communication protocols.

    For this, we now consider and compare the following two specific settings. The mission is still sampling a classical correlation $P$. In the two settings, \alice and \bob first share two different resources of a same size respectively: one is entangled quantum state, and the other is public randomness. We set the amount of shared resources in such a way that to fulfill the task, {they may need more computational resource, which we suppose to be quantum communication}. Therefore, we can see that one of the two settings is actually a purely quantum protocol, while the other is a classical-quantum hybrid protocol. We compare the amount of quantum communication needed in the second stage. Clearly, this is a reasonable way to compare the computational power of the shared entanglement and public randomness involved in the first stage. 

    More specifically, suppose $P\in \mbR_+^{n\times m}$ is the target classical correlation. And we let the common size of the initial shared resources be $\lceil \log_2 \prank(P) \rceil$ bits or qubits. Then in the purely quantum protocol, the quantum communication needed in the second stage is zero, as the shared quantum state in the first stage is already sufficient to sample $P$. { As a result, to compare the two settings, the remaining problem is estimating how much quantum communication is needed in classical-quantum hybrid protocols. For convenience, we denote this quantity by $t$ qubits.}

    We immediately have two trivial lower and upper bounds for $t$. First, if $\prank(P)<\rank_+(P)$, which is usually the case, then $t>0$. Second, \alice and \bob can choose to throw away the shared randomness and generate $P$ from scratch in the second stage, and the corresponding cost is $\lceil \log_2 \prank(P)\rceil$ qubits. Therefore, it holds that
    \begin{equation}\label{eq:trivialbounds}
    t\leq \lceil \log_2 \prank(P)\rceil.
    \end{equation}
    Actually, we can prove the following result, which provides a nontrivial lower bound for $t$.
    \begin{lemma} In a classical-quantum hybrid protocol that generates $P\in \mbR_+^{n\times m}$, suppose the costs of the first and the second stages are $c$ bits and $s$ qubits respectively. Then it holds that
    \begin{equation}
    2s+c\geq\lceil \log_2 \rank(P)\rceil.
    \end{equation}
    \end{lemma}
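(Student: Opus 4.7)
The plan is to unpack a classical-quantum hybrid protocol with costs $c$ and $s$ into an explicit decomposition of $P$, and then bound $\rank(P)$ using the well-known inequality $\rank(A)\le \prank(A)^2$ that appears throughout the paper (e.g. in the derivation of Eq.~\eqref{eq:withRank}).

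Concretely, I would first recall the structure of a classical-quantum hybrid protocol, as laid out in the proof of Theorem~1: if the classical communication is $c$ bits and the quantum capability used is $s$ qubits, then $P$ admits a decomposition
\[
P(x,y) \;=\; \sum_{i=1}^{2^c} p_i\, P_i(x,y),
\]
where $\{p_i\}$ is a probability distribution and each $P_i$ is a classical correlation with $\prank(P_i)\le 2^s$. This is simply the characterization already used to relate hybrid protocols to $2^s$-block PSD factorizations.

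Next I would translate each PSD-rank bound into a rank bound via the standard inequality $\rank(P_i)\le \prank(P_i)^2 \le 2^{2s}$. Subadditivity of rank under addition (applied to the nonnegative matrices $p_i P_i$, each of which has the same rank as $P_i$ up to the scalar $p_i>0$) then gives
\[
\rank(P) \;\le\; \sum_{i=1}^{2^c} \rank(p_i P_i) \;\le\; 2^c \cdot 2^{2s} \;=\; 2^{c+2s}.
\]
Taking $\log_2$ on both sides and using that $c+2s$ is already an integer yields $\lceil \log_2 \rank(P)\rceil \le 2s + c$, which is the claimed inequality.

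I do not anticipate a real obstacle here; the argument is essentially two lines once the protocol-to-decomposition translation from Theorem~1 is in hand. The only minor care needed is to ensure the decomposition really has $2^c$ terms (one may pad with zero matrices if the actual index set $I$ in the protocol has $|I|<2^c$) and to remember that $P_i$ and $p_iP_i$ have the same rank so that the scalar weights do not interfere with the rank bound.
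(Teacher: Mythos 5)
Your proposal is correct and matches the paper's own argument: both unpack the hybrid protocol into a decomposition of $P$ into $2^c$ nonnegative pieces of PSD-rank at most $2^s$, apply $\rank(A)\le\prank(A)^2$ to each piece, and conclude by subadditivity of rank. The only (immaterial) difference is that you keep the probability weights $p_i$ explicit while the paper absorbs them into the $P_i$.
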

    \begin{proof}According to the structures of classical-quantum hybrid protocols, we have $P=\sum_{i=1}^{2^c}P_i$, and $\prank(P_i)\leq2^s$ for any $i\in[2^c]$. Then using the relation $\prank(A)\geq\sqrt{\rank(A)}$ for any nonnegative matrix $A$, it holds that $\rank(P_i)\leq2^{2s}$. In the meanwhile, we also have that
    \begin{equation}
    \rank(P)\leq\sum_{i=1}^{2^c}\rank(P_i)\leq 2^{2s+c},
    \end{equation}
    which concludes the proof.
    \end{proof}
    Recall that in our setting we set $c$ to be $\lceil \log_2 \prank(P) \rceil$, hence the above lemma implies the following fact.
    \begin{corollary}\label{coro:lowerboundforT}
    \begin{equation}
    t\geq \frac{1}{2}\left(\lceil \log_2 \rank(P)\rceil-\lceil \log_2 \prank(P) \rceil\right).
    \end{equation}
    \end{corollary}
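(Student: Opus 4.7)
The plan is to invoke the preceding lemma directly with the specific parameters dictated by the setting described just before the corollary. In that setting, the size of the shared classical randomness in the first stage has been fixed to $\lceil \log_2 \prank(P) \rceil$ bits, which plays the role of $c$ in the lemma, and the quantum communication cost of the second stage is by definition $t$, which plays the role of $s$. Substituting $c = \lceil \log_2 \prank(P) \rceil$ and $s = t$ into the lemma's inequality $2s + c \geq \lceil \log_2 \rank(P) \rceil$ and rearranging yields
\[
t \;\geq\; \tfrac{1}{2}\bigl(\lceil \log_2 \rank(P) \rceil - \lceil \log_2 \prank(P) \rceil\bigr),
\]
as claimed.

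Since the derivation is a one-line arithmetic consequence of the lemma, there is essentially no obstacle to overcome. The only thing worth checking is that the protocols considered in the corollary genuinely fit into the classical-quantum hybrid framework of the lemma: the consumption of shared randomness together with any classical messages used to coordinate the quantum stage constitute the classical first stage of cost $c$, and the subsequent quantum communication constitutes the quantum second stage of cost $s = t$. Both identifications are immediate from the definitions given earlier, so the corollary follows at once without further work.
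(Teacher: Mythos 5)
Your proposal is correct and matches the paper's own derivation exactly: the paper states the corollary as an immediate consequence of the preceding lemma with $c=\lceil \log_2 \prank(P)\rceil$ and $s=t$, which is precisely the substitution and rearrangement you perform. Nothing further is needed.
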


    {Note that there exists nontrivial nonnegative matrices $P$ such that $\prank(P)=\sqrt{\rank(P)}$~\cite{lee2017some}}. If we choose such $P$ as our target classical correlation, the result given by Corollary \ref{coro:lowerboundforT} is actually
    \begin{equation}
    t\geq \frac{1}{2}\lceil \log_2 \prank(P)\rceil.
    \end{equation}
    This indicates that the trivial upper bound in Eq.\eqref{eq:trivialbounds} can be tight up to a factor $1/2$. 

    \section{Conclusion}
    Motivated by the fact that the scale of near-term quantum computing is quite limited, in this paper we proposal two kinds of hybrid protocols that combine classical power and quantum power to generate large-scale classical correlations. By looking into the connections between these two models, we show that their performances are close, thus we can choose to focus on the more flexible one of them, i.e., the model of classical-quantum hybrid protocols. {Particularly, we show that this kind of protocols can be fully characterized by the new {concepts of $k$-block positive semi-definite rank and $k$-block positive semi-definite factorization that we proposed}. By specific examples, we show that hybrid protocols have rich mathematical structures, which, from two different viewpoints, indicate the remarkable quantum advantages in generating classical correlations. Indeed, we witness the cases where in order to compensate the shortage of single-qubit quantum resource, a large amount of classical resource has to be consumed. In the meanwhile, by comparing two specific settings with the same amount but different {kinds} of beforehand shared {resources}, we may gain a better understanding of the different power of shared entanglement and public randomness in communication complexity theory.}


    \begin{acknowledgments}
    We thank Xun Gao and Zhengfeng Ji for helpful comments. X.L. and Z.W. are supported by the National Key R\&D Program of China, Grant No. 2018YFA0306703 and the start-up funds of Tsinghua University, Grant No. 53330100118. This work has been supported in part by the Zhongguancun Haihua Institute for Frontier Information Technology. P.Y. is supported by the National Key R\&D Program of China 2018YFB1003202, National Natural Science Foundation of China (Grant No. 61972191), the Fundamental Research Funds for the Central Universities 0202/14380068, a China Youth 1000-Talent grant and Anhui Initiative in Quantum Information Technologies Grant No. AHY150100.
    \end{acknowledgments}

    \bibliographystyle{alpha}
  \bibliography{ref}

\newcommand{\etalchar}[1]{$^{#1}$}
\begin{thebibliography}{LWdW17}

\bibitem[AAB{\etalchar{+}}19]{arute2019quantum}
Frank Arute, Kunal Arya, Ryan Babbush, Dave Bacon, Joseph~C Bardin, Rami
  Barends, Rupak Biswas, Sergio Boixo, Fernando~GSL Brandao, David~A Buell,
  et~al.
\newblock Quantum supremacy using a programmable superconducting processor.
\newblock {\em Nature}, 574(7779):505--510, 2019.

\bibitem[BBC{\etalchar{+}}93]{bennett1993teleporting}
Charles~H Bennett, Gilles Brassard, Claude Cr{\'e}peau, Richard Jozsa, Asher
  Peres, and William~K Wootters.
\newblock Teleporting an unknown quantum state via dual classical and
  einstein-podolsky-rosen channels.
\newblock {\em Physical review letters}, 70(13):1895, 1993.

\bibitem[BFL91]{babai1991non}
L{\'a}szl{\'o} Babai, Lance Fortnow, and Carsten Lund.
\newblock Non-deterministic exponential time has two-prover interactive
  protocols.
\newblock {\em Computational complexity}, 1(1):3--40, 1991.

\bibitem[CB97]{CB97}
Richard Cleve and Harry Buhrman.
\newblock Substituting quantum entanglement for communication.
\newblock {\em Phys. Rev. A}, 56(2):1201--1204, 1997.

\bibitem[CR93]{cohen1993nonnegative}
Joel~E Cohen and Uriel~G Rothblum.
\newblock Nonnegative ranks, decompositions, and factorizations of nonnegative
  matrices.
\newblock {\em Linear Algebra and its Applications}, 190:149--168, 1993.

\bibitem[FGP{\etalchar{+}}15]{fawzi2015positive}
Hamza Fawzi, Jo{\~a}o Gouveia, Pablo~A Parrilo, Richard~Z Robinson, and Rekha~R
  Thomas.
\newblock Positive semidefinite rank.
\newblock {\em Mathematical Programming}, 153(1):133--177, 2015.

\bibitem[FMP{\etalchar{+}}12]{fiorini2012linear}
Samuel Fiorini, Serge Massar, Sebastian Pokutta, Hans~Raj Tiwary, and Ronald
  De~Wolf.
\newblock Linear vs. semidefinite extended formulations: exponential separation
  and strong lower bounds.
\newblock In {\em Proceedings of the forty-fourth annual ACM symposium on
  Theory of computing}, pages 95--106, 2012.

\bibitem[Gav20]{10.1145/3357713.3384243}
Dmitry Gavinsky.
\newblock Bare quantum simultaneity versus classical interactivity in
  communication complexity.
\newblock In {\em Proceedings of the 52nd Annual ACM SIGACT Symposium on Theory
  of Computing}, STOC 2020, page 401¨C411, New York, NY, USA, 2020. Association
  for Computing Machinery.

\bibitem[GPT13]{gouveia2013lifts}
Joao Gouveia, Pablo~A Parrilo, and Rekha~R Thomas.
\newblock Lifts of convex sets and cone factorizations.
\newblock {\em Mathematics of Operations Research}, 38(2):248--264, 2013.

\bibitem[Hru12]{hrubevs2012nonnegative}
Pavel Hrube{\v{s}}.
\newblock On the nonnegative rank of distance matrices.
\newblock {\em Information Processing Letters}, 112(11):457--461, 2012.

\bibitem[JNV{\etalchar{+}}20]{JNVWY'20}
Zhengfeng Ji, Anand Natarajan, Thomas Vidick, John Wright, and Henry Yuen.
\newblock $\mathrm{MIP}^*=\mathrm{RE}$.
\newblock {\em arXiv preprint arXiv:2001.04383}, 2020.

\bibitem[JSWZ13]{jain2013efficient}
Rahul Jain, Yaoyun Shi, Zhaohui Wei, and Shengyu Zhang.
\newblock Efficient protocols for generating bipartite classical distributions
  and quantum states.
\newblock {\em IEEE Transactions on Information Theory}, 59(8):5171--5178,
  2013.

\bibitem[JWYZ17]{jain2017multipartite}
Rahul Jain, Zhaohui Wei, Penghui Yao, and Shengyu Zhang.
\newblock Multipartite quantum correlation and communication complexities.
\newblock {\em Computational Complexity}, 26:199, 2017.

\bibitem[KN97]{Nisan97}
Eyal Kushilevitz and Noam Nisan.
\newblock {\em Communication complexity}.
\newblock Cambridge University Press, 1997.

\bibitem[LC10]{lin2010nonnegative}
Matthew~M Lin and Moody~T Chu.
\newblock On the nonnegative rank of euclidean distance matrices.
\newblock {\em Linear algebra and its applications}, 433(3):681--689, 2010.

\bibitem[LS09]{10.5555/1803907}
Troy Lee and Adi Shraibman.
\newblock {\em Lower Bounds in Communication Complexity}.
\newblock Now Publishers Inc., Hanover, MA, USA, 2009.

\bibitem[LWdW17]{lee2017some}
Troy Lee, Zhaohui Wei, and Ronald de~Wolf.
\newblock Some upper and lower bounds on psd-rank.
\newblock {\em Mathematical programming}, 162(1-2):495--521, 2017.

\bibitem[New91]{NEWMAN199167}
Ilan Newman.
\newblock Private vs. common random bits in communication complexity.
\newblock {\em Information Processing Letters}, 39(2):67 -- 71, 1991.

\bibitem[Nie99]{nielsen1999conditions}
Michael~A Nielsen.
\newblock Conditions for a class of entanglement transformations.
\newblock {\em Physical Review Letters}, 83(2):436, 1999.

\bibitem[Pre18]{preskill2018quantum}
John Preskill.
\newblock Quantum computing in the nisq era and beyond.
\newblock {\em Quantum}, 2:79, 2018.

\bibitem[Shi19]{shitov2019euclidean}
Yaroslav Shitov.
\newblock Euclidean distance matrices and separations in communication
  complexity theory.
\newblock {\em Discrete \& Computational Geometry}, 61(3):653--660, 2019.

\bibitem[SVW16]{sikora2016minimum}
Jamie Sikora, Antonios Varvitsiotis, and Zhaohui Wei.
\newblock Minimum dimension of a hilbert space needed to generate a quantum
  correlation.
\newblock {\em Physical Review Letters}, 117(6):060401, 2016.

\bibitem[Yao79]{yao1979some}
Andrew Chi-Chih Yao.
\newblock Some complexity questions related to distributive computing
  (preliminary report).
\newblock In {\em Proceedings of the eleventh annual ACM symposium on Theory of
  computing}, pages 209--213, 1979.

\bibitem[Yao93]{Yao93}
A~Chi-Chih Yao.
\newblock Quantum circuit complexity.
\newblock In {\em Proceedings of the 34th Annual IEEE Symposium on Foundations
  of Computer Science}, pages 352--361. IEEE, 1993.

\bibitem[Zha12]{zhang2012quantum}
Shengyu Zhang.
\newblock Quantum strategic game theory.
\newblock In {\em Proceedings of the 3rd Innovations in Theoretical Computer
  Science Conference}, pages 39--59, 2012.

\end{thebibliography}

\end{document}